\newtheorem{theorem}{Theorem}
\newtheorem{lemma}{Lemma}
\newtheorem{definition}{Definition}
\newtheorem{observation}{Observation}
\newtheorem{assumption}{Assumption}
\DeclareMathOperator*{\argmax}{argmax}
\title{Bayesian Agency: Linear versus Tractable Contracts}
\author{
	Matteo Castiglioni\\
	Politecnico di Milano\\
	\texttt{matteo.castiglioni@polimi.it}
	\And
	Alberto Marchesi\\
	Politecnico di Milano\\
	\texttt{alberto.marchesi@polimi.it}
	\And
	Nicola Gatti\\
	Politecnico di Milano\\
	\texttt{nicola.gatti@polimi.it}
}
\begin{document}


\maketitle

\begin{abstract}
	We study \emph{principal-agent} problems in which a principal commits to an outcome-dependent payment scheme (a.k.a. \emph{contract}) so as to induce an agent to take a costly, unobservable action.
	We relax the assumption that the principal perfectly knows the agent by considering a \emph{Bayesian} setting where the agent's type is unknown and randomly selected according to a given probability distribution, which is known to the principal.
	Each agent's type is characterized by her own action costs and action-outcome distributions.
	In the literature on non-Bayesian principal-agent problems, considerable attention has been devoted to \emph{linear contracts}, which are simple, pure-commission payment schemes that still provide nice approximation guarantees with respect to principal-optimal (possibly non-linear) contracts.
	While in non-Bayesian settings an optimal contract can be computed efficiently, this is no longer the case for our Bayesian principal-agent problems.
	This further motivates our focus on {linear contracts}, which can be optimized efficiently given their single-parameter nature.
	Our goal is to analyze the properties of linear contracts in Bayesian settings, in terms of approximation guarantees with respect to optimal contracts and general \emph{tractable contracts} (\emph{i.e.}, efficiently-computable ones).
	
	First, we study the approximation guarantees of linear contracts with respect to optimal ones, showing that the former suffer from a multiplicative loss that grows linearly in the number of agent's types.
	Nevertheless, we prove that linear contracts can still provide a constant multiplicative approximation $\rho$ of the optimal principal's expected utility, though at the expense of an exponentially-small additive loss $2^{-\Omega(\rho)}$.
	Then, we switch to tractable contracts, showing that, surprisingly, linear contracts perform well among them.
	In particular, we prove that it is \textsf{NP}-hard to design a contract providing a multiplicative loss sublinear in the number of agent's types, while the same holds for contracts that provide a constant multiplicative approximation $\rho$ at the expense of an additive loss $2^{-\omega(\rho)}$.
	We conclude by showing that, in Bayesian principal-agent problems, an optimal contract can be computed efficiently if we fix either the number of agent's types or the number of outcomes.
\end{abstract}


\section{Introduction}\label{sec:introduction}

\emph{Principal-agent} problems are ubiquitous in real-world economies.
These problems model interactions between two parties, a \emph{principal} and an \emph{agent}, where the latter chooses an action that determines some externalities on the former.
We focus on \emph{hidden-action} models, where the principal cannot observe the action taken by the agent, but only a stochastic outcome that is probabilistically determined as a result of the agent's action.
Each action is associated with a corresponding cost for the agent, while the principal receives a reward for the resulting outcome.
As a result, the principal's objective is to incentive an agent's action that leads to favorable outcomes.
This is achieved by committing to an outcome-dependent payment scheme, usually called \emph{contract}.

Principal-agent problems are pervasive in classical economic scenarios.
A well-known textbook example of principal-agent problem is that of a salesperson (agent) working for a company (principal).
The former has to decide on the level of effort she wants to put in selling products for the company.
Naturally, the company cannot observe the chosen level of effort (action), but it is only aware of the number of products sold.
Assuming that this figure is correlated with the level of effort selected by the salesperson, the company can incentivize an high level of effort by paying a commission to the salesperson based on the actual number of sales.

Interactions involving a principal and an agent play a crucial role also in modern economies centered around digital means.
In spite of this, principal-agent problems received far less attention from the economics and computation community than auctions and, more in general, mechanism design problems (more details on related computational works appear later in this section).
Remarkably, principal-agent models may have potential applications in various real-world settings, such as, \emph{e.g.}, crowdsourcing platforms~\citep{ho2016adaptive}, blockchain-based smart contracts~\citep{cong2019blockchain}, and healthcare~\citep{bastani2016analysis}.

In this paper, we study a generalization of the classical hidden-action principal-agent problem.
In particular, we relax the assumption that the principal perfectly knows the agent by considering a \emph{Bayesian} setting in which the agent's type is unknown and randomly selected according to a given probability distribution, which is known to the principal.
Each agent's type is characterized by her own action costs and action-outcome distributions.
In the salesperson example, types may correspond to different skill profiles for the salesperson, \emph{e.g.}, a clever worker can achieve better sales results than a non-clever one by putting the same level of effort in her work.

In the literature on principal-agent problems, considerable attention has been devoted to \emph{linear contracts} (see, \emph{e.g.},~\cite{carroll2015robustness,carroll2019robustness,dutting2019simple}), which are pure-commission payment schemes that pay the agent a given fraction of the principal's reward associated with the obtained outcome.
These contracts enjoy some nice properties.
In particular, they are simple to understand---given their single-parameter nature---and, in non-Bayesian settings, they still provide good approximation guarantees with respect to a principal-optimal (possibly non-linear) contract~\citep{dutting2019simple}.
While in non-Bayesian principal-agent problems an optimal contract can be computed efficiently by using a linear program, this is no longer the case in our Bayesian setting.
This further motivates our focus on {linear contracts}, which can be optimized efficiently given their single-parameter nature.

\subsection{Original Contributions}

The main goal of our work is to analyze the properties of linear contracts in Bayesian principal-agent settings, in order to understand their approximation guarantees with respect to optimal contracts and \emph{tractable} ones, with the latter being defined as those that can be computed efficiently (\emph{i.e.}, in polynomial time).
In particular, we look at approximations of the principal's expected utility. 
Notice that, while optimal contracts are a natural benchmark in any principal-agent problem, the comparison with tractable contracts becomes relevant and fundamental in our Bayesian model, where an optimal contract cannot be computed efficiently, and, thus, the most natural benchmark is the family of all contracts that can be computed in polynomial time. 
%

After introducing all the required preliminary concepts in Section~\ref{sec:preliminaries}, we start our analysis by studying, in Section~\ref{sec:linear}, the approximation guarantees of linear contracts with respect to optimal ones in Bayesian principal-agent problems.
We show that, from a purely-multiplicative approximation perspective, linear contracts suffer from a loss with respect to an optimal contract that grows linearly in the number of agent's types.
This happens in degenerate instances in which the principal's rewards are exponentially small in the number of agent's types, thus suggesting that there is hope linear contracts could obtain a constant multiplicative approximation, at the expense of an exponentially-small additive loss.
This motivates the introduction of $\big( \rho, g(\rho) \big)$\emph{-bi-approximate} contracts, which are those providing the principal with an expected utility at least $\frac{OPT}{\rho} - g(\rho)$, where $OPT$ is the principal's expected utility in an optimal contract.
Our main result is that linear contracts give a $\big( \rho, 2^{-\Omega(\rho)}\big)$\emph{-bi-approximation} of an optimal contract, \emph{i.e.}, they guarantee a constant multiplicative approximation $\rho$ of the optimal principal's expected utility, at the expense of an exponentially-small additive loss $2^{-\Omega(\rho)}$.
We complement this result by showing that no linear contract can provide a $\big(  \rho ,2^{-\omega \left( \rho \right)} \big)$-approximation of an optimal one, even in non-Bayesian settings.
This implies that, using linear contracts, we can only obtain bi-approximations whose additive losses decrease at most exponentially in the multiplicative factor $\rho$.
Notice that our bi-approximation results also hold for the basic non-Bayesian case, complementing known approximation results of linear contracts in such setting~\citep{dutting2019simple} (see the related works for more details).

Then, in Section~\ref{sec:hardness}, we focus on the performances of linear contracts with respect to tractable ones in Bayesian settings, showing that, surprisingly, they perform well.
In particular, we show that there is no tractable contract providing a constant multiplicative loss with respect to an optimal one.
Formally, we prove that it is $\mathsf{NP}$-hard to design a contract with a multiplicative loss sublinear in the number of agent's types.
Then, we study the approximation guarantees of tractable contracts in terms of bi-approximations.
We prove that it is $\mathsf{NP}$-hard to design a contract that provides a $\big( \rho,2^{-\omega(\rho)} \big)$-bi-approximation of an optimal one, thus matching the lower bound of linear contracts.

We conclude with Section~\ref{sec:tractable}, where we show that there are some special cases of our Bayesian principal-agent problem in which an optimal contract can be computed in polynomial time.
In particular, this happens if we fix either the number of agent's types or the number of outcomes.

\subsection{Related Works}

Hidden-action principal-agent problems have received considerable attention in the economic literature, where they usually fall under the umbrella of a broader subject called \emph{contract theory}, which is a fundamental pillar of microeconomic theory~\citep{shavell1979risk,grossman1983analysis,rogerson1985repeated,holmstrom1991multitask} (see the books by~\citet{mas1995microeconomic},~\citet{bolton2005contract},~and~\citet{laffont2009theory} for a detailed treatment of the subject).

The first computational studies on principal-agent problems appeared only recently.
Among them, it is worth discussing in detail that of~\citet{dutting2019simple}, which is perhaps the most related to ours.
\citet{dutting2019simple} study non-Bayesian principal-agent problems (\emph{i.e.}, a special case of our setting having only one agent's type), with a focus on linear contracts.
In particular, they show that linear contracts provide a constant multiplicative approximation of the principal's expected utility in an optimal contract, except in degenerate instances having the following three properties \emph{simultaneously}: there are many agent's actions, there is a big spread of rewards, and there is a big spread of costs.
Moreover, the results of~\citet{dutting2019simple} are tight.
In our work, we extend this comparison between linear and optimal contracts to our Bayesian settings.
However, apart from that, our work considerably departs from~\citep{dutting2019simple}, since our main focus is on understanding the performances of linear contracts with respect to tractable ones.
Notice that this is \emph{not} a concern for~\citet{dutting2019simple}, since, differently from the Bayesian setting, an optimal contract can be computed efficiently in classical (non-Bayesian) principal-agent problems.

There is a number of other computational works that study extensions of classical hidden-action principal-agent problems exhibiting some sort of combinatorial structure.
For instance, the work of~\citet{babaioff2006combinatorial} studies a model with multiple agents (see also its extended version~\citep{babaioff2012combinatorial} and its follow-ups~\citep{babaioff2009free,babaioff2010mixed}).
Its focus is on how complex combinations of agents' actions influence the resulting outcome in presence of inter-agent externalities, while in our model there is only one agent that can be of different types, and, thus, no externalities among agent's types are involved.
Moreover,~\citet{babaioff2006combinatorial} study settings in which each agent has only two actions, while in our model each agent's type can have an arbitrary number of actions.
Recently,~\citet{dutting2020complexity} study another principal-agent problem whose underlying structure is combinatorial, as a result of defining the outcome space implicitly through a suitably-defined succinct representation.

Other computational works on principal-agent problems worth citing are~\citep{babaioff2014contract}, which introduces a notion of contract complexity based on the number of different payments specified by the contract, and~\citep{ho2016adaptive}, which develops a dynamic model where, in each round, the principal determines a contract and an agent chooses an action, resulting in a reward for the principal.
These works considerably depart form ours, as they study rather different models.
The first one considers an $n$-player normal-form framework in which actions are \emph{not} hidden.
The second work uses multi-armed bandit techniques, and, thus, the goal is to minimize the principal’s regret over time.

In conclusion, we also point out that considerable attention (especially in the economic literature) has been devoted to the study of some \emph{robustness} properties of linear contracts in classical principal-agent problems~\citep{carroll2015robustness,carroll2019robustness}.
This perspective has also been taken by~\citet{dutting2019simple} using a more computationally-oriented point of view.

\paragraph{Note on Concurrent Work by~\citet{guruganesh2020contracts}}
The work by~\citet{guruganesh2020contracts}, which has been developed independently and concurrently with ours, studies the same Bayesian principal-agent problem that we address in this paper.
\citet{guruganesh2020contracts} characterize worst-case multiplicative approximation guarantees of linear contracts, comparing them with some benchmarks (including optimal contracts).
%
Among the results they provide, the closest to ours are discussed in the following.
First, they show a tight approximation guarantee for linear contracts, which is linear in the number of agent's actions and logarithmic in the number of agent's types when all agent's types share the same costs, while, if they may have different costs, it is linear in the number of types and actions.
This result is similar to our result in Section~\ref{sec:linear_vs_optimal}, where we only consider the dependency on the number of types.
Second, they show the hardness of computing a single contract or a menu of contracts approximating the optimal principal's expected utility ip to within a given constant multiplicative factor.
In Section~\ref{sec:hardness}, we show a stronger result.
In particular, we prove the hardness of computing a single contract with a multiplicative loss sublinear in the number of types.
Finally, they show that an optimal contract can be computed efficiently if we fix either the number of agent’s types or the number of outcomes.
This is equivalent to our results in Section~\ref{sec:tractable}.
In conclusion, even though~\citet{guruganesh2020contracts} study the same principal-agent problem, they focus on the approximation guarantees of linear contracts with respect to optimal ones and other possible benchmarks, while our main focus is their relation with efficiently computable contracts.

\section{Preliminaries}\label{sec:preliminaries}

%
In this section, we introduce all the elements we need in the rest of this work.
Section~\ref{sec:preliminaries_problem} formally defines the problem we study, Section~\ref{sec:preliminaries_contracts} describes its solutions (contracts), while Section~\ref{sec:preliminaries_apx} defines which kind of approximation guarantees we look for in contracts.

\subsection{The Bayesian Principal-Agent Problem}\label{sec:preliminaries_problem}

An instance of the \emph{Bayesian principal-agent problem} is characterized by a tuple $(\Theta,A,\Omega)$, where: $\Theta$ is a finite set of $\ell \coloneqq |\Theta|$ agent's types; $A$ is a finite set of $n \coloneqq |A|$ actions available to the agent; and $\Omega$ is a finite set of $m \coloneqq |\Omega|$ possible outcomes.~\footnote{For the simplicity of exposition, we assume that all the agent's types share the same action set. All the results continue to hold even if each agent's type $\theta \in \Theta$ has her own action set $A_\theta$.}
The agent's type is drawn according to a fixed probability distribution known to the principal. 
We let $\mu \in \Delta_{\Theta}$ be such a distribution, with $\mu_\theta$ denoting the probability of type $\theta \in \Theta$ being selected.~\footnote{Given a finite set $X$, we denote with $\Delta_X$ the set of all the probability distributions defined over $X$.}
For each type $\theta \in \Theta$, we introduce $F_{\theta, a} \in \Delta_\Omega$ to denote the probability distribution over outcomes $ \Omega$ when an agent of type $\theta$ selects action $a \in A$, while $c_{\theta, a} \in [0,1]$ is the agent's cost for that action.~\footnote{For the ease of presentation, we assume that rewards and costs are in $[0,1]$. Notice that all the results in this work can be easily generalized to the case of an arbitrary range of positive numbers, by applying a suitable normalization.}
We let $F_{\theta, a, \omega} $ be the probability that $F_{\theta, a}$ assigns to $\omega \in \Omega$, so that $\sum_{\omega \in \Omega} F_{\theta, a, \omega}  =1$.
Each outcome $\omega \in \Omega$ is characterized by a reward $r_\omega \in [0,1]$ for the principal.
As a result, when an agent of type $\theta \in \Theta$ selects an action $a \in A$, then the principal achieves an expected reward $R_{\theta,a}$, which is defined as $R_{\theta,a} \coloneqq \sum_{\omega \in \Omega} F_{\theta, a, \omega} \, r_\omega$.
As in classical (non-Bayesian) principal-agent problems, the principal's objective is to commit to a contract that maximizes her expected utility, as we formally describe in the following.

\subsection{Contracts}\label{sec:preliminaries_contracts}

A \emph{contract} is specified by payments from the principal to the agent, which are contingent on the actual outcome achieved with the agent's action.
We let $p_\omega \ge 0$ be the payment associated to outcome $\omega \in\Omega$.
The assumption that payments are non-negative (\emph{i.e.}, they can only be from the principal to the agent, and \emph{not} the other way around) is common in contract theory, where it is known as \emph{limited liability}~\citep{carroll2015robustness}.
When an agent of type $\theta \in \Theta$ selects an action $a \in A$, then the expected payment to the agent is $P_{\theta,a} \coloneqq \sum_{\omega \in \Omega} F_{\theta, a, \omega} \, p_\omega$, while her utility is $P_{\theta,a} - c_{\theta, a}$.
On the other hand, the principal's expected utility in that case is $R_{\theta,a} -P_{\theta,a}$.

Given a contract, an agent of type $\theta \in \Theta$ selects an action such that:
\begin{enumerate}
	\item it is \emph{incentive compatible} (IC), \emph{i.e.}, it maximizes her expected utility among actions in $A$;
	\item it is \emph{individually rational} (IR), \emph{i.e.}, it has non-negative expected utility (if there is no IR action, then the agent of type $\theta$ abstains from playing so as to maintain the \emph{status quo}).
\end{enumerate}

For the ease of presentation, we adopt the following w.l.o.g. common assumption~\citep{dutting2019simple}, which guarantees that IR is always enforced and, thus, it allows us to focus on IC only.
\begin{assumption}\label{ass:ir}
	There exists an action $a \in A$ such that $c_{\theta, a} = 0$ for all $\theta \in \Theta$.
\end{assumption} 
The assumption ensures that each agent's type has always an action providing her with a non-negative utility, thus ensuring IR of any IC action.

We say that a contract \emph{implements} an action $a^\ast \in A$ for an agent of type $\theta \in \Theta$ if the agent chooses that action.~\footnote{As it is common in the literature~\citep{dutting2020complexity}, we assume that the agent breaks ties in favor of the principal, \emph{i.e.}, whenever there is more than one IC action, she selects the one maximizing the principal's expected utility.}
Finally, given a contract, by letting $a^\ast(\theta) \in A$ be the action implemented by such contract for an agent of type $\theta \in \Theta$, we can define the overall principal's expected utility as $ \sum_{\theta \in \Theta} \mu_\theta \left( R_{\theta,a^\ast (\theta)} -P_{\theta,a^\ast (\theta)} \right)$, which accounts for type probabilities.
%
%
A special class of simple contracts that is commonly studied in the literature is that of \emph{linear contracts}, which give payments equal to some fixed fraction of the outcome rewards~\citep{dutting2019simple}.
Thus, these contracts are completely characterized by a single parameter $\alpha \in [0,1]$, with their payments being defined as $p_\omega = \alpha \, r_\omega$ for every outcome $\omega \in \Omega$.
We refer the reader to the work by~\citet{dutting2019simple} for more details on linear contracts in non-Bayesian principal-agent problems, including their geometric interpretation.

\subsection{Approximation Guarantees of Contracts}\label{sec:preliminaries_apx}

The goal of the principal is to design an \emph{optimal} contract, which is one maximizing her overall expected utility.
In the following, we denote with $OPT$ the principal's overall expected utility in an optimal contract.
As we show later in this work, computing an optimal contract in our setting is computationally intractable (with the exception of some special cases).
Thus, we look at suboptimal contracts providing some guaranteed approximation of the principal's optimal utility $OPT$.

Given a contract, we say that its {multiplicative loss} with respect to an optimal contract is $\rho \geq 1$ if it provides the principal with an overall expected utility of $\frac{OPT}{\rho}$.
Equivalently, we sometimes say that the contract provides a multiplicative approximation $\rho$ of an optimal one.
%

We also study approximation guarantees of contracts by considering both additive and multiplicative approximations at the same time.
Formally, given a multiplicative approximation $\rho \geq 1$, we say that a contract provides a $\left( \rho, g(\rho) \right)$\emph{-bi-approximation} of an optimal one if it results in an overall principal's expected utility greater than or equal to $\frac{OPT}{\rho} - g(\rho)$, where $g(\rho)$ denotes a (positive) additive loss depending on the parameter $\rho$.
Intuitively, bi-approximations allow us to analyze the performances of contracts by carefully managing the trade off between a desired (constant) multiplicative approximation factor and an additional (small) additive loss.
We are interested in $\left( \rho, g(\rho) \right)$-bi-approximations such that the term $g(\rho)$ quickly approaches zero as $\rho$ increases.
In particular, later in this work, we focus on bi-approximations whose $g(\rho)$ terms decrease exponentially in $\rho$, so that the additive loss becomes quickly negligible.

%
%

\section{Linear versus Optimal Contracts: The Bayesian Setting}\label{sec:linear}

We start by analyzing the performances of linear contracts with respect to optimal (possibly non-linear) ones.
Section~\ref{sec:linear_vs_optimal} studies how linear contracts perform in terms of multiplicative loss, while Section~\ref{sec:linear_bi_apx} provides our main results on the bi-approximation guarantees of linear contracts.

\subsection{Multiplicative Approximations}\label{sec:linear_vs_optimal}

We prove that, in Bayesian principal-agent problems, linear contracts do \emph{not} perform well when compared to optimal ones in terms of their multiplicative loss.
%
Formally, in the following Theorem~\ref{thm:linear_lower_bound}, we construct particular instances showing that, in the worst case, the multiplicative loss of any linear contract with respect to an optimal one increases at least linearly in the number of agent's types $\ell$.
%
%
We remark that, in the instances used to prove the theorem, the agent has only two actions available (notice that $A = \{ a_1, a_2 \}$ in the proof).
This strengthens already-known results.
Indeed,~\citet{dutting2019simple} prove that, in the special case of non-Bayesian principal-agent problems, linear contracts are arbitrarily worse than optimal ones for a growing number of agent's actions, as their worst-case multiplicative loss is equal to $n$.
Our result shows that, in Bayesian settings with many agent's types, the multiplicative loss of linear contracts can be arbitrarily bad even in the basic case in which the agent has only two actions. 
%

\begin{theorem}\label{thm:linear_lower_bound}
	In Bayesian principal-agent problems, the worst-case multiplicative loss of any linear contract with respect to an optimal one is $\Omega(\ell)$, where $\ell$ is the number of agent's types.
	%
	%
\end{theorem}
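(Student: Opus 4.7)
The plan is to construct, for every integer $\ell \geq 2$, a Bayesian principal-agent instance with exactly $\ell$ types and only two actions in which some non-linear contract attains expected principal utility $\Omega(2^{-\ell})$, whereas every linear contract attains at most $O(2^{-\ell}/\ell)$. Taking the ratio yields the claimed $\Omega(\ell)$ multiplicative loss. The instance I would use puts a uniform prior on types $\theta_1, \ldots, \theta_\ell$, uses outcomes $\{\omega_0, \omega_1, \ldots, \omega_\ell\}$ with $r_{\omega_0} = 0$ and $r_{\omega_i} = 2^{-i}$ for $i \geq 1$, and has two actions $a_1, a_2$. Action $a_1$ has zero cost for every type (so Assumption~\ref{ass:ir} holds) and deterministically produces $\omega_0$. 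Action $a_2$ for type $\theta_i$ deterministically produces $\omega_i$ and has cost $c_{\theta_i, a_2} = 2^{-i} - 2^{-\ell} \in [0, 1/2]$, so the cost-to-reward ratios $c_{\theta_i, a_2} / r_{\omega_i} = 1 - 2^{i-\ell}$ are all distinct and accumulate near $1$.

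For the non-linear benchmark, I would set $p_{\omega_0} := 0$ and $p_{\omega_i} := c_{\theta_i, a_2}$ for $i \geq 1$. Under this contract every type is exactly indifferent between $a_1$ and $a_2$, and the principal-favoring tie-break selects $a_2$ because it yields per-type principal utility $r_{\omega_i} - p_{\omega_i} = 2^{-\ell}$ versus $0$ from $a_1$; averaging over the uniform prior gives $OPT \geq 2^{-\ell}$ (a strict inequality is obtained by perturbing each $p_{\omega_i}$ by an infinitesimal $\varepsilon > 0$).

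For the linear upper bound, fix any $\alpha \in [0,1)$. Type $\theta_i$ is incentivized to play $a_2$ iff $\alpha \cdot 2^{-i} \geq 2^{-i} - 2^{-\ell}$, equivalently $2^{-i} \leq 2^{-\ell}/(1-\alpha)$; summing the geometric series over those $i$ gives $\sum_{i} 2^{-i} \leq 2 \cdot 2^{-\ell}/(1-\alpha)$. Since the principal collects $(1-\alpha) \cdot 2^{-i}$ from each incentivized type and $0$ otherwise, the total expected utility is at most $\tfrac{1-\alpha}{\ell} \cdot \tfrac{2 \cdot 2^{-\ell}}{1-\alpha} = 2^{1-\ell}/\ell$, a bound independent of $\alpha$; the crucial cancellation is that any $\alpha$ that reaches more types (via smaller $1-\alpha$) must simultaneously surrender the same factor in per-outcome retention. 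Dividing the two estimates yields $OPT / (\text{linear}) \geq \ell/2$. The main obstacle is engineering the scaling so that three constraints coexist: values must lie in $[0,1]$, the cost-to-reward ratios must be spread finely enough that no single slope incentivizes many high-reward types, and the per-type $OPT$ contribution must not be swamped by the $1/\ell$ prior. The choice $r_{\omega_i} = 2^{-i}$ with $c_{\theta_i, a_2} = r_{\omega_i} - 2^{-\ell}$ is tuned to balance all three, after which everything else is a clean geometric-series bound.
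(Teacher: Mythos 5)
Your proposal is correct and arrives at the same $\ell/2$ lower bound as the paper via essentially the same mechanism: two actions per type, one of which produces a type-specific outcome with reward $2^{-i}$, with costs chosen so that the cost-to-reward ratios $c_{\theta_i,a_2}/r_{\omega_i}$ are spread geometrically, forcing any fixed linear slope $\alpha$ to either incentivize few types or surrender most of the retained fraction $(1-\alpha)$; the geometric-series cancellation then caps every linear contract at $O(2^{-\ell}/\ell)$ while a per-type-tailored contract extracts $\Omega(2^{-\ell})$.

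The one place your parametrization genuinely differs from the paper's is in how the per-type contributions are balanced. The paper sets $c_{\theta_k,a_1} = 2^{-k}(1-2^{-k})$, which gives per-type welfare $R - c = 2^{-2k}$ decaying geometrically in $k$, and then compensates with a non-uniform prior $\mu_{\theta_k} \propto 2^{2(k-\ell)}$ so that each type contributes equally to $OPT$. You instead set $c_{\theta_i,a_2} = 2^{-i} - 2^{-\ell}$, which makes the per-type welfare exactly $2^{-\ell}$ for every type, so a uniform prior already balances the contributions. The net effect is identical, but your version avoids carrying the normalization constant $N$ and keeps the prior uniform, which makes the bookkeeping in the linear-contract upper bound a bit cleaner. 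Both constructions also share the same minor caveat you flag: with exact indifference, one either invokes the principal-favoring tie-break (as the paper does) or perturbs payments by $\varepsilon$; either route is fine.
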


\begin{proof}
	For any $\ell \in \mathbb{N}$, let us consider a principal-agent setting $(\Theta, A, \Omega)$ with outcome set $\Omega = \{ \omega_j \}_{j \in [m]}$, where we let $m = \ell +1$.
	We define $r_{\omega_j} = 2^{-j}$ for $\omega_j \in \Omega \setminus \{ \omega_m \}$, while $r_{\omega_m} = 0$.
	The set of agent's types is $\Theta = \{ \theta_k \}_{k \in [\ell]}$, with $\mu \in \Delta_\Theta$ being defined so that $\mu_{\theta_k} = \frac{1}{N} 2^{2(k - \ell)}$ for all $\theta_k \in \Theta$, where $N \coloneqq \sum_{k \in [\ell]} 2^{2(k - \ell)}$ is a suitably defined normalization constant.
	Each agent of type $\theta_k \in \Theta$ has two actions available, namely $A = \{ a_1, a_2 \}$, with probability distributions defined so that $F_{\theta_k,a_1,\omega_k} = 1$ and $F_{\theta_k,a_2,\omega_m} = 1$.
	Intuitively, action $a_1$ of type $\theta_k$ deterministically results in outcome $\omega_k$ (with reward $r_{\omega_k} = 2^{-k}$), while action $a_2$ leads to outcome $r_{\omega_m}$ no matter the agent's type (with reward $r_{\omega_m} = 0$).
	Moreover, the action costs for type $\theta_k \in \Theta$ are $c_{\theta_k,a_1} = 2^{-k} \left(  1- 2^{-k}  \right)$ and $c_{\theta_k,a_2}= 0$.
	The optimal (non-linear) contract sets the payments as follows: $p_{\omega_j} = 2^{-j} \left( 1-2^{-j} \right)$ for all $\omega_j \in \Omega \setminus \{ \omega_m \}$, while $p_{\omega_m} = 0$.
	This contract implements action $a_1$ for each agent's type $\theta_k \in \Theta$, as her utility by playing $a_1$ is $P_{\theta_k, a_1} - c_{\theta_k, a_1} = p_{\omega_k} - c_{\theta_k,a_1} = 0$ and $r_{\omega_k} > 0$, while the utility of $a_2$ is zero and $r_{\omega_m} = 0$ (as previously stated, we assume that ties are broken in favor of the principal).
	As a result, the contract provides the principal with an overall expected utility of:
	\[
		\sum_{\theta_k \in \Theta} \mu_{\theta_k} \left(  R_{\theta_k, a_1} - P_{\theta_k, a_1}  \right) = \frac{1}{N}\sum_{\theta_k \in \Theta} 2^{2(k - \ell)} \left[  2^{-k} - 2^{-k} \left( 1-2^{-k} \right)  \right] =\frac{1}{N}\sum_{\theta_k \in \Theta}   2^{-2\ell}= \frac{\ell \,  2^{-2\ell}}{N} .
	\]
	Now, let us consider a linear contract with parameter $\alpha \in [0,1]$.
	For an agent of type $\theta_k \in \Theta$, the contract implements action $a_1$ only if $P_{\theta_k, a_1} = p_{\omega_k} = \alpha\, r_{\omega_k} = \alpha\, 2^{-k} \geq  2^{-k} \left(  1- 2^{-k}  \right) = c_{\theta_k,a_1}$.
	It is easy to check that an agent of type $\theta_k$ is incentivized to play $a_1$ if and only if $k \leq - \log_2 (1 -\alpha)$.
	Let $k' \coloneqq \left\lfloor - \log_2 (1 -\alpha) \right\rfloor$ be the highest index among agent's types that are incentivized to play action $a_1$.
	%
	%
	%
	Then, the overall principal's expected utility is:
	\begin{align*}
		\sum_{\theta_k \in \Theta: k \leq k'} \mu_{\theta_k} (1 -\alpha) \, R_{\theta_k, a_1} & =\frac{1}{N} \sum_{\theta_k \in \Theta: k \leq k'}2^{2(k - \ell)} (1 -\alpha) \, 2^{-k} \leq \\
		& \leq \frac{1}{N} \sum_{\theta_k \in \Theta: k \leq k'}2^{2(k - \ell)} \, 2^{-k'} \, 2^{-k} = \frac{1}{N} 2^{-2 \ell} \, 2^{-k'} \sum_{\theta_k \in \Theta: k \leq k'} 2^{k} \leq \\
		 & \leq \frac{1}{N} 2^{-2 \ell} \,2^{-k'} \,2^{k' +1} = \frac{2 \, 2^{-2 \ell}}{N},
	\end{align*}
	where the first inequality follows from $1 -\alpha \leq 1 - 2^{-k} \leq 1 - 2^{-k'}$ (since the contract implements $a_1$ for type $\theta_k$ and it holds $k \leq k'$). 
	This concludes the proof.
\end{proof}

We remark that the approximation result in Theorem~\ref{thm:linear_lower_bound} is tight in many cases.
%
%
This is readily seen by leveraging the approximation results of~\citet{dutting2019simple}.
Let us recall that~\citet{dutting2019simple} show that, in non-Bayesian principal-agent problems, linear contracts provide a constant multiplicative approximation of optimal ones, except in settings where the following three conditions hold \emph{simultaneously}: there are many agent's actions, there is a big spread of expected rewards, and there is a big spread of costs.
Thus, whenever at least one of the conditions above does \emph{not} hold in a Bayesian principal-agent setting, we have a simple polynomial-time algorithm that returns a linear contract with multiplicative loss $O(\ell)$ (matching the lower bound in Theorem~\ref{thm:linear_lower_bound}).
This algorithm computes an approximate linear contract of~\citet{dutting2019simple} for each agent's type {singularly} and returns the one providing the highest overall principal's expected utility (after weighting them by the corresponding type probabilities).~\footnote{For each agent's type $\theta \in \Theta$, the algorithm computes the linear contract of Theorem~5.1~in~\citep{dutting2019simple} if the number of actions is small (that of Theorem~5.5, respectively Theorem~5.7,~in~\citep{dutting2019simple} if the spread of rewards, respectively costs, is small).}
%
%
Let us also notice that, even in pathological cases in which all the conditions above hold simultaneously, the result in Theorem~\ref{thm:linear_lower_bound} is still tight in the number of agent's types $\ell$, though the multiplicative loss of linear contracts could be arbitrarily bad in one or more of the other parameters (number of agent's actions, spread of rewards, and spread of costs).
{For instance, \citet{guruganesh2020contracts} show that the worst-case loss is linear in the number of actions.}

\subsection{Bi-Approximation Guarantees}\label{sec:linear_bi_apx}

%
The instances exploited in the proof of Theorem~\ref{thm:linear_lower_bound} suggest that the negative result holds only when the rewards (and, thus, the principal's expected utilities) are very small.
In particular, the rewards decrease exponentially in the approximation factor (the number of agent's types).
This suggests that linear contracts could provide nice approximation guarantees when looking at bi-approximations. 

Next, we prove that linear contracts achieve good bi-approximations of the optimal principal's expected utility: for any constant $\rho$, 
they provide a multiplicative approximation $\rho$ at the expense of an exponentially small additive loss $2^{-\Omega \left(  \rho \right) }$.
%
%
Let us remark that the additive loss decreases exponentially as $\rho$ increases, becoming quickly negligible.
For instance, given a constant multiplicative approximation factor $\rho = 50$, the resulting additive loss is $2^{-24}$, while Theorem~\ref{thm:linear_lower_bound} shows that linear contracts provide a (non-constant) approximation decreasing linearly in the number of agent's types $\ell$ if we only consider multiplicative factors.
%
%
%

We start by proving the result in the easier non-Bayesian setting, in which $\ell = 1$.~\footnote{When we refer to a non-Bayesian principal-agent instance, we adopt the same notational conventions, dropping any reference to the types.}

First, we introduce the following observation that is useful to prove the following Theorem~\ref{thm:mult_add_no_bayes}, as well as other results in the rest of this work.
%

\begin{observation}\label{obs:opt_bound}
	Given a non-Bayesian principal-agent instance, it holds $OPT \leq \max_{a \in A} \left\{  R_a - c_a \right\}$.
\end{observation}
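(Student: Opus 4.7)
The plan is to bound the principal's utility from any individual contract by the "first-best" welfare-style quantity $\max_a (R_a - c_a)$, and then take the supremum over contracts. Fix an arbitrary contract with payments $\{p_\omega\}_{\omega\in\Omega}$ and let $a^\ast \in A$ be the action it implements. Since the principal's utility equals $R_{a^\ast} - P_{a^\ast}$, it suffices to show $P_{a^\ast} \geq c_{a^\ast}$, for then $R_{a^\ast} - P_{a^\ast} \leq R_{a^\ast} - c_{a^\ast} \leq \max_{a\in A}(R_a - c_a)$.

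To establish $P_{a^\ast} \geq c_{a^\ast}$, I would invoke Assumption~\ref{ass:ir}: there exists an action $a_0 \in A$ with $c_{a_0}=0$. Playing $a_0$ yields the agent a utility of $P_{a_0}-c_{a_0}=P_{a_0}\geq 0$, because payments are non-negative under limited liability. Hence $a_0$ is individually rational, which means the agent is guaranteed a non-negative utility option and, by IC, the implemented action $a^\ast$ must give utility at least as high. In particular, $P_{a^\ast} - c_{a^\ast} \geq P_{a_0}-c_{a_0} \geq 0$, which is exactly the desired inequality.

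Combining the two steps gives $R_{a^\ast} - P_{a^\ast} \leq \max_{a\in A}(R_a - c_a)$ for every contract; taking the supremum over contracts on the left yields $OPT \leq \max_{a\in A}\{R_a - c_a\}$.

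There is no real obstacle here: the statement is essentially a direct consequence of limited liability plus IR, so the only care needed is to invoke Assumption~\ref{ass:ir} correctly to argue that the agent's reservation utility is at least zero, and to note that ties-broken-in-favor-of-the-principal does not affect the upper bound argument (we only use the inequality $P_{a^\ast}\geq c_{a^\ast}$, which holds for \emph{any} IC action).
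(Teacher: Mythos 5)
Your proof is correct and follows exactly the same route as the paper's: the key observation in both is that the IR constraint (guaranteed to bind at a non-negative level by Assumption~\ref{ass:ir} plus limited liability) forces $P_{a^\ast} \geq c_{a^\ast}$ for the implemented action, whence the principal's utility $R_{a^\ast}-P_{a^\ast}$ is bounded by $R_{a^\ast}-c_{a^\ast} \le \max_a (R_a - c_a)$. You simply spell out the steps the paper's one-line proof leaves implicit.
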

\begin{proof}
	It is sufficient to notice that, by the IR property, the agent's expected payment covers the cost $c_a$ of the implemented action $a\in A$, and, thus, the principal's expected utility is always upper-bounded by $R_a - c_a$.
\end{proof}

\begin{theorem}\label{thm:mult_add_no_bayes}
	Given a non-Bayesian principal-agent instance, linear contracts provide a $\big( \rho, 2^{-\Omega \left( \rho \right)} \big)$-bi-approximation of an optimal contract.
	Moreover, for any $\rho \geq 2$, there is a linear contract $\alpha$ that provides a $\left( \rho, 2^{- d \rho + e } \right)$-bi-approximation of an optimal contract for two constants $d \in \mathbb{R}^+$ and $e \in \mathbb{R}$, where $\alpha = 1 - 2^{-i}$ for some $i = 1, \ldots, \left\lfloor \rho /2 \right\rfloor$.~\footnote{A bi-approximation as in Theorems~\ref{thm:mult_add_no_bayes}~and~\ref{thm:mult_add_bayes} for the cases in which $\rho \in [1,2)$ can be easily obtained by using the linear contract $\alpha=\frac{1}{2}$, which always provides an additive loss of $\frac{1}{2}$.}
\end{theorem}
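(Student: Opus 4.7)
My plan is to prove the stronger quantitative second statement, which directly implies the informal first statement. The core idea is to upper-bound $OPT$ via the first-best welfare $W^\ast \coloneqq \max_{a \in A}(R_a - c_a)$ (Observation~\ref{obs:opt_bound}) and then express $W^\ast$ as an integral which I will telescope along the dyadic grid $\alpha_i \coloneqq 1 - 2^{-i}$. Setting $k \coloneqq \lfloor \rho/2 \rfloor$ and writing $f(\alpha) \coloneqq (1-\alpha) R_{a(\alpha)}$ for the principal's utility at the linear contract with parameter $\alpha$, I will show that the maximum of $f$ over the grid is at least $OPT/\rho - 2^{-\rho/2+1}$.

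First I would establish the identity $W^\ast = \int_0^1 R_{a(\alpha)}\, d\alpha$. To do this, consider the agent's equilibrium utility $u(\alpha) \coloneqq \max_{a \in A}(\alpha R_a - c_a)$: as a pointwise maximum of affine functions of $\alpha$, $u$ is piecewise linear convex, and its slope on each linearity piece equals $R_{a(\alpha)}$. Assumption~\ref{ass:ir} gives $u(0) = \max_a(-c_a) = 0$, while $u(1) = W^\ast$ by definition, so telescoping over the (finitely many) pieces yields the identity.

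Next I would dyadically split the integral. A routine revealed-preference argument (adding the two IC inequalities at $\alpha < \alpha'$) shows that $R_{a(\alpha)}$ is weakly increasing in $\alpha$. Partitioning $[0,1]$ into the intervals $[\alpha_{i-1}, \alpha_i]$ for $i = 1,\dots,k$ (with $\alpha_0 \coloneqq 0$) together with the tail $[\alpha_k, 1]$ of length $2^{-k}$, monotonicity gives $R_{a(\alpha)} \leq R_{a(\alpha_i)}$ on the $i$-th inner piece and $R_{a(\alpha)} \leq 1$ on the tail (all rewards lie in $[0,1]$). Since $\alpha_i - \alpha_{i-1} = 2^{-i}$, this yields
\[
W^\ast \;\leq\; \sum_{i=1}^{k} 2^{-i} R_{a(\alpha_i)} + 2^{-k} \;=\; \sum_{i=1}^{k} f(\alpha_i) + 2^{-k}.
\]
Bounding the maximum by the average and using $W^\ast \geq OPT$ together with $k \leq \rho$ and $k \geq 1$,
\[
\max_{1 \leq i \leq k} f(\alpha_i) \;\geq\; \frac{W^\ast - 2^{-k}}{k} \;\geq\; \frac{OPT}{\rho} - 2^{-k} \;\geq\; \frac{OPT}{\rho} - 2^{-\rho/2 + 1},
\]
where the last inequality uses $k = \lfloor \rho/2 \rfloor > \rho/2 - 1$. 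This proves the $\bigl(\rho, 2^{-d\rho + e}\bigr)$-bi-approximation with $d = 1/2$ and $e = 1$, attained by the grid point $\alpha = 1 - 2^{-i^\ast}$ realizing the maximum.

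The main obstacle will be the integral identity of the first step: the only technical subtlety is the principal-favorable tie-breaking at the (finitely many) breakpoints of $a(\cdot)$, but these form a set of measure zero and $u$ is continuous, so the integral is unaffected and the piecewise-linear telescoping argument remains elementary and self-contained.
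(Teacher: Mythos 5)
Your proof is correct and takes a genuinely different route from the paper. The paper's argument is combinatorial: it introduces the crossing parameters $\alpha_{i-1,i}$ at which the agent switches actions, uses the inequality $(R_{a_i}-c_{a_i})-(R_{a_{i-1}}-c_{a_{i-1}})\leq(1-\alpha_{i-1,i})R_{a_i}$ from Dütting et al., telescopes to get $R_{a_I}-c_{a_I}\leq\sum_{i\leq I}(1-\alpha_{i-1,i})R_{a_i}$, and then pays a factor of~$2$ to pass from $1-\alpha_{i-1,i}$ to $1-\alpha_i$, ending with $\max_i(1-\alpha_i)R_{a_i}\geq(OPT-2^{-I})/(2I)$. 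You instead use the envelope-theorem identity $W^\ast=u(1)-u(0)=\int_0^1 R_{a(\alpha)}\,d\alpha$ and then bound the integral directly on the dyadic grid via monotonicity of $R_{a(\cdot)}$: this entirely bypasses the crossing parameters and the case distinction on whether $a_{i-1}=a_i$, and it is sharper, giving $\max_i f(\alpha_i)\geq(W^\ast-2^{-k})/k$ without the factor of~$2$ in the denominator. Both arguments rest on the same engine (Observation~\ref{obs:opt_bound} plus the fact that total agent utility gains along the $\alpha$-axis account for welfare), but yours packages it as a single integral comparison rather than a discrete telescope. One small point worth making explicit when you write it up: in the bound $R_{a(\alpha)}\leq R_{a(\alpha_i)}$ on $[\alpha_{i-1},\alpha_i]$ you need $R_{a(\alpha_i)}$ to be the \emph{right} derivative of $u$ at $\alpha_i$, which is exactly what the principal-favorable tie-breaking rule guarantees (since for $\alpha_i<1$ it selects the maximizer with the largest reward); you allude to the tie-breaking issue but only in connection with the measure-zero breakpoints, whereas this is where it actually does work for you.
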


\begin{proof}
	For the ease of presentation, given $\rho \ge 2$, we let $I \coloneqq \lfloor \rho / 2  \rfloor $, while $[I]$ is the set of integers from $1$ to $\lfloor \rho / 2  \rfloor$.
	Moreover, for each $i \in [I]$, we define $\alpha_i \coloneqq 1 - 2^{-i} $, while, letting $A = \{ a_i \}_{i \in [n]}$, we assume w.l.o.g. that the first $I$ actions of $A$ are those implemented by the linear contracts with parameters $\alpha_i$, so that $a_i \in A $ denotes the agent's action implemented by $\alpha_i$.
	For $i \in [I]: i > 1$, we also let $\alpha_{i-1, i} \in \left[  \alpha_{i-1}, \alpha_i \right]$ be the parameter identifying a linear contract such that the agent is indifferent between actions $a_{i-1}$ and $a_i$.
	Whenever $a_{i-1}$ and $a_i$ are the same action, then we can set w.l.o.g. $\alpha_{i-1, i} \coloneqq \alpha_i$.
	Instead, if $a_{i-1}$ and $a_i$ are different actions, it is easy to check that it must be $\alpha_{i-1, i} \coloneqq \frac{c_{a_{i-1}} - c_{a_i}}{R_{a_{i-1}} - R_{a_i}}$ (it cannot be the case that $R_{a_{i-1}} = R_{a_i}$, otherwise one between $a_{i-1}$ and $a_i$ would be weakly dominated by the other, and, thus, never implemented by a contract that breaks ties deterministically in favor of the principal).
	Finally, for convenience, we let $\alpha_{0,1} \coloneqq 0$.
	In the following, we show that at least one linear contract among those with parameters $\alpha_i$ for $i \in [I]$ provides the principal with expected utility at least $\frac{OPT}{\rho} - 2^{-\frac{\rho}{2}+1 }$.
	
	In the rest of the proof, we need the following observation due to~\citet{dutting2019simple}.
	\begin{observation}[Essentially Observation~6 in~\citep{dutting2019simple}]\label{obs:delta_welfare}
		Given $i \in [I] : i > 1$, it holds:
		\[
			\left(  R_{a_i} - c_{a_i} \right) - \left(  R_{a_{i-1}} - c_{a_{i-1}} \right) \leq 	\left(  1 - \alpha_{i-1, i} \right) R_{a_i}.
		\]
	\end{observation}
	Observation~\ref{obs:delta_welfare} allows us to prove the following lemma.
	\begin{lemma}\label{lem:telescoping}
		For every  $i' \in [I]$, it holds that:
		\[
			R_{a_{i'}} - c_{a_{i'}} \leq \sum_{i \in [I]: i \leq i'} \left(  1  -\alpha_{i-1,i} \right) R_{a_i} .
		\]
	\end{lemma}
	\begin{proof}
		The proof is by induction.
		For the base case $i' = 1$, we have $\left( 1 - \alpha_{0,1} \right) R_{a_1} = R_{a_1} \geq R_{a_1} - c_{a_1}$.
		Next, for every $i' \in [I]: i'\geq 2$, let us assume by induction that $R_{a_{i' - 1}} - c_{a_{i' -1}} \leq \sum_{i \in [I]: i \leq i' -1} \left(  1  -\alpha_{i-1,i} \right) R_{a_i} $.
		Then, by using Observation~\ref{obs:delta_welfare} and the inductive hypothesis, we get:
		\begin{align*}
			R_{a_{i'}} - c_{a_{i'}} & = R_{a_{i'}} - c_{a_{i'}} - \left(  R_{a_{i' -1}} - c_{a_{i' -1}} \right) + \left(  R_{a_{i' -1}} - c_{a_{i' -1}} \right) \leq \\
			& \leq \left(  1  -\alpha_{i'-1,i'} \right) R_{a_{i'}} + \sum_{i \in [I]: i \leq i' -1} \left(  1  -\alpha_{i-1,i} \right) R_{a_i} = \\
			& = \sum_{i \in [I]: i \leq i'} \left(  1  -\alpha_{i-1,i} \right) R_{a_i}.
		\end{align*}
		This concludes the proof of the lemma.
	\end{proof}
	
	Now, we can prove the following:
	\[
		\max_{i \in [I]} \left(  1  - \alpha_{i}  \right) R_{a_i} \geq \frac{1}{2} \max_{i \in [I]} \left(  1  - \alpha_{i-1, i}  \right) R_{a_i} \geq \frac{1}{2 I} \sum_{i \in [I]} \left( 1 - \alpha_{i-1, i}  \right) R_{a_i} \geq \frac{R_{a_I} - c_{a_I}}{2 I},
	\]
	where the first inequality holds since $1 - \alpha_{i} = 2^{-i} = \frac{1}{2} 2^{- \left(  i -1 \right)} = \frac{1}{2} \left( 1 - \alpha_{i - 1} \right) \geq \frac{1}{2} \left( 1 - \alpha_{i - 1, i} \right) $, the second one holds by definition of $\max$, while the last one by Lemma~\ref{lem:telescoping}.
	Finally, by letting $a^\ast \in \argmax_{a \in A} \left\{  R_a - c_a \right\}$, we have:
	\[
		R_{a_I} - c_{a_I} \geq \alpha_{I}  R_{a_I} - c_{a_I} \geq \alpha_{I}  R_{a^\star} - c_{a^\star} \geq \left( 1 - 2^{-I} \right) R_{a^\star} - c_{a^\star} \geq R_{a^\star} - c_{a^\star} - 2^{-I} \geq OPT - 2^{-I},
	\]
	where the second inequality holds since the linear contract with parameter $\alpha_{I}$ implements action $a_I$, the second-last inequality follows from $R_{a^\star} \in [0,1]$, while the last one holds by Observation~\ref{obs:opt_bound}. 
	%
	In conclusion,
	\[
		\max_{i \in [I]} \left(  1  - \alpha_{i} \right) R_{a_i} \geq \frac{R_{a_I} - c_{a_I}}{2 I} \geq \frac{OPT - 2^{-I}}{2 I} \geq   \frac{OPT}{\rho} - 2^{-\frac{\rho}{2} +1 }.
	\]
	This concludes the proof of the theorem. 
	%
\end{proof}

We can exploit a reasoning similar to that used in the proof of Theorem~\ref{thm:mult_add_no_bayes} to prove our main result for the general Bayesian setting.

\begin{theorem}\label{thm:mult_add_bayes}
	Given a Bayesian principal-agent instance, linear contracts provide a $\big( \rho, 2^{-\Omega \left( \rho \right)} \big)$-bi-approximation of an optimal contract.
	Moreover, for any $\rho \geq 2$, there is a linear contract $\alpha$ that provides a $\left( \rho, 2^{- d \rho + e } \right)$-bi-approximation of an optimal contract for two constants $d \in \mathbb{R}^+$ and $e \in \mathbb{R}$, where $\alpha = 1 - 2^{-i}$ for some $i = 1, \ldots, \left\lfloor \rho /2 \right\rfloor$.
\end{theorem}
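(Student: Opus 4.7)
The plan is to mirror the structure of the proof of Theorem~\ref{thm:mult_add_no_bayes}, applying the per-action-sequence telescoping lemma pointwise in the type $\theta$ and then using a ``max is at least average'' argument to commute the maximum over candidate linear contracts with the expectation over types. Fix $\rho \geq 2$, set $I \coloneqq \lfloor \rho/2 \rfloor$, and consider the $I$ linear contracts with parameters $\alpha_i \coloneqq 1 - 2^{-i}$ for $i \in [I]$. For every type $\theta \in \Theta$ and every $i \in [I]$, let $a_i(\theta) \in A$ be the action implemented by $\alpha_i$ for an agent of type $\theta$, and let $\alpha^{\theta}_{i-1,i}$ be the type-dependent indifference parameter between $a_{i-1}(\theta)$ and $a_i(\theta)$, defined exactly as in the non-Bayesian proof but using $R_{\theta, \cdot}$ and $c_{\theta, \cdot}$.

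Next, I would apply Lemma~\ref{lem:telescoping} separately to each type $\theta$: reading the proof of that lemma, all it uses are the indifference inequalities between consecutive linear contracts, which hold type by type. This yields, for every $\theta$,
\[
R_{\theta, a_I(\theta)} - c_{\theta, a_I(\theta)} \leq \sum_{i \in [I]} \bigl( 1 - \alpha^\theta_{i-1,i} \bigr) R_{\theta, a_i(\theta)} \leq 2 \sum_{i \in [I]} (1 - \alpha_i) R_{\theta, a_i(\theta)},
\]
where the second inequality uses $1 - \alpha_i = \tfrac{1}{2}(1 - \alpha_{i-1}) \geq \tfrac{1}{2}(1 - \alpha^\theta_{i-1,i})$, exactly as in the non-Bayesian argument. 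Taking expectation over $\theta \sim \mu$ and using ``max is at least average'' in $i$,
\[
\max_{i \in [I]} \sum_{\theta \in \Theta} \mu_\theta (1 - \alpha_i) R_{\theta, a_i(\theta)} \geq \frac{1}{I} \sum_{i \in [I]} \sum_{\theta \in \Theta} \mu_\theta (1 - \alpha_i) R_{\theta, a_i(\theta)} \geq \frac{1}{2I} \sum_{\theta \in \Theta} \mu_\theta \bigl(R_{\theta, a_I(\theta)} - c_{\theta, a_I(\theta)}\bigr).
\]
The left-hand side is the maximum, over $i \in [I]$, of the principal's expected utility under the linear contract $\alpha_i$.

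To finish, I would lower-bound the right-hand side by $OPT$. First, applying the same telescoping reasoning used at the end of the non-Bayesian proof to each $\theta$ with $a^\ast(\theta) \in \argmax_{a \in A}\{R_{\theta,a} - c_{\theta,a}\}$, the contract $\alpha_I$ implements an action $a_I(\theta)$ that is at least as good (for the principal, after tie-breaking) as committing to $a^\ast(\theta)$ via a linear contract, giving
\[
R_{\theta, a_I(\theta)} - c_{\theta, a_I(\theta)} \geq \bigl(1 - 2^{-I}\bigr) R_{\theta, a^\ast(\theta)} - c_{\theta, a^\ast(\theta)} \geq R_{\theta, a^\ast(\theta)} - c_{\theta, a^\ast(\theta)} - 2^{-I},
\]
since $R_{\theta,a^\ast(\theta)} \in [0,1]$. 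Second, the Bayesian analogue of Observation~\ref{obs:opt_bound} reads $OPT \leq \sum_{\theta} \mu_\theta \max_{a \in A}(R_{\theta,a} - c_{\theta,a}) = \sum_{\theta} \mu_\theta (R_{\theta, a^\ast(\theta)} - c_{\theta, a^\ast(\theta)})$, which follows from IR type by type. Chaining everything,
\[
\max_{i \in [I]} \sum_{\theta \in \Theta} \mu_\theta (1 - \alpha_i) R_{\theta, a_i(\theta)} \geq \frac{OPT - 2^{-I}}{2I} \geq \frac{OPT}{\rho} - 2^{-\rho/2 + 1},
\]
giving the desired $(\rho, 2^{-d\rho + e})$-bi-approximation with $d = 1/2$ and $e = 1$.

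The only genuinely delicate step is the one where I pass from a per-type telescoping identity to a single linear contract that works well \emph{simultaneously} across all types: the indifference parameters $\alpha^\theta_{i-1,i}$ vary with $\theta$, and the actions $a_i(\theta)$ implemented by a fixed $\alpha_i$ differ across types, so one cannot just average the non-Bayesian result. The resolution is that the bound one actually needs on the left-hand side only involves the fixed quantity $1 - \alpha_i$, so after the $\tfrac{1}{2}$ slack replacing $\alpha^\theta_{i-1,i}$ by $\alpha_{i-1}$, Fubini and max-at-least-average absorb the type dependence cleanly.
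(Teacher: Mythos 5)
Your proof is correct and follows essentially the same approach as the paper: define the $I = \lfloor \rho/2 \rfloor$ candidate linear contracts $\alpha_i$, apply the telescoping lemma per type with type-dependent indifference parameters, exchange the maximum over $i$ with the expectation over $\theta$ via a max-is-at-least-average argument, and finish with the per-type analogue of the $OPT$ upper bound. The one step you singled out as "genuinely delicate" — commuting the max over candidate contracts with the expectation over types, and the fact that the needed bound only involves the type-independent factor $1-\alpha_i$ — is exactly the observation the paper's (terser) argument relies on, so your exposition faithfully fills in the gap the paper leaves implicit.
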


\begin{proof}
	The proof follows the lines of that of Theorem~\ref{thm:mult_add_no_bayes}, where we let $I \coloneqq \lfloor \rho / 2  \rfloor$ and $\alpha_i \coloneqq 1 - 2 ^{-i}$ for $i \in [I]$.
	In this case, for every agent's type $\theta \in \Theta$, with a slight abuse of notation we define $a_i^\ast(\theta) \in A$ as the action implemented by the linear contract with parameter $\alpha_i$ for an agent of type $\theta$.
	Moreover, for every $i \in [I]$, we introduce the parameters $\alpha_{a_{i-1}^\ast(\theta) , a_i^\ast(\theta) }$ (with $\alpha_{a_{0}^\ast(\theta) , a_1^\ast(\theta) } \coloneqq 0$), which are the analogous of the parameters $\alpha_{i - 1, i}$ in the proof of Theorem~\ref{thm:mult_add_no_bayes}, for actions $a_{i-1}^\ast(\theta)$ and $a_{i}^\ast(\theta)$.
	Then, following steps similar to those in Theorem~\ref{thm:mult_add_no_bayes} (including an analogous of Lemma~\ref{lem:telescoping}), we can prove the following:
	\begin{align*}
		\max_{i \in [I]} \sum_{\theta \in \Theta} \mu_{\theta} \left( 1 - \alpha_i \right) R_{\theta, a_i^\ast(\theta)} & \geq  \frac{1}{2} \max_{i \in [I]} \sum_{\theta \in \Theta} \mu_{\theta} \left( 1 - \alpha_{a_{i-1}^\ast(\theta) , a_i^\ast(\theta) } \right) R_{\theta, a_i^\ast(\theta)} \geq \\
		& \geq  \frac{1}{2 I} \sum_{\theta \in \Theta} \mu_{\theta} \sum_{i \in [I]} \left( 1 - \alpha_{a_{i-1}^\ast(\theta) , a_i^\ast(\theta) } \right) R_{\theta, a_i^\ast(\theta)} \geq \\
		& \geq  \frac{1}{2 I} \sum_{\theta \in \Theta} \mu_{\theta} \sum_{i \in [I]} \left( R_{\theta, a_I^\ast(\theta)}  - c_{\theta,a_I^\ast(\theta) } \right) \geq \\
		& \geq 	\frac{1}{2 I} \sum_{\theta \in \Theta} \mu_{\theta} \left(   OPT_{\theta} - 2^{-I} \right) \geq \\
		& \geq \frac{OPT}{\rho} - 2^{-\frac{\rho}{2}+ 1 },
	\end{align*}
	where, in the second-last step, $OPT_{\theta}$ denotes the principal expected utility in an optimal contract for the non-Bayesian setting in which only type $\theta \in \Theta$ is present.
\end{proof}

The following theorem shows that the bounds provided in Theorems~\ref{thm:mult_add_no_bayes}~and~\ref{thm:mult_add_bayes} are tight.
%

\begin{theorem}\label{thm:mult_add_imp}
	No linear contract provides a $\big(  \rho ,2^{-O \left( \rho \right)} \big)$-approximation of an optimal contract, even in non-Bayesian principal-agent problems. Equivalently, for any $\rho \ge 1$, no linear contract provides a $\left(  \rho ,2^{- d \rho + e} \right)$-bi-approximation of an optimal one, for two constants $d \in \mathbb{N}$ and $e \in \mathbb{Z}$.
	%
\end{theorem}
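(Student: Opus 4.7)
The plan is to build, for each integer $n \geq 1$, a non-Bayesian instance $I_n$ with two simultaneous properties: (i) the optimal principal utility is of order $OPT_n = \Theta(2^{-cn})$ for an absolute constant $c>0$, and (ii) every linear contract in $I_n$ suffers multiplicative loss at least $\Omega(n)$, i.e.\ achieves utility at most $OPT_n/(Cn)$ for an absolute constant $C>1$. Property~(ii) can be lifted from a tight example for the worst-case $n$-approximation guarantee of linear contracts in non-Bayesian settings (analogous to the tightness direction in~\citet{dutting2019simple}). Property~(i) is then secured by uniformly rescaling all rewards and all costs by $2^{-cn}$: such a rescaling leaves the multiplicative gap unchanged, because an agent's best response under any contract depends only on expected-payment minus cost differences, which scale by the same factor; in particular each linear $\alpha$ implements the same action as in the unscaled instance, and both $OPT$ and the best linear utility are multiplied by $2^{-cn}$.

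Given the family $\{I_n\}$, the impossibility follows mechanically. Fix candidate constants $d \in \mathbb{N}$ with $d > c$ and $e \in \mathbb{Z}$, and fix $\rho \geq 1$; consider $I_{\lceil \rho \rceil}$. Any linear contract there has utility at most $OPT_n/(Cn) \leq OPT_n/(C\rho)$, so being a $(\rho, g(\rho))$-bi-approximation would require
\[
g(\rho) \;\geq\; \frac{OPT_n}{\rho} \;-\; \frac{OPT_n}{C\rho} \;=\; \Omega\!\left(\frac{OPT_n}{\rho}\right) \;=\; \Omega\!\left(\frac{2^{-c\rho}}{\rho}\right).
\]
Since $d > c$, the candidate additive term satisfies $2^{-d\rho + e} = o\bigl(2^{-c\rho}/\rho\bigr)$ as $\rho \to \infty$, so the inequality above is violated for all sufficiently large $\rho$, and the bi-approximation fails in $I_{\lceil\rho\rceil}$. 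The residual small values of $\rho$ can be handled by shifting the constant $e$ (the required gap at any bounded $\rho$ is itself bounded below by a constant in a fixed instance).

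The main obstacle is step~(i)+(ii), namely exhibiting the base instance $I_n$ with multiplicative-loss $\Omega(n)$ in the non-Bayesian setting. A preliminary observation is that with only two outcomes every limited-liability contract is equivalent to a linear one (set $p_{\omega_0}=0$ and $p_{\omega_1}=\alpha\, r_{\omega_1}$), so the construction necessarily uses at least three outcomes and exploits the flexibility of non-linear payments across them. A natural route places $n$ ``distractor'' actions on the upper convex hull of the $(R_a, c_a)$-plane, so that each linear $\alpha$ implements a distinct action giving the principal only $O(1/n)$ of the optimal welfare, alongside one off-hull ``target'' action that only an outcome-specific (non-linear) contract can incentivize. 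Verifying the $\Omega(n)$ multiplicative-loss bound carefully in this arrangement is the technical heart of the proof; once it is in place, the rescaling and the short bi-approximation calculation above deliver the theorem.
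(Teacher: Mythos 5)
Your high-level plan — construct a one-parameter family of non-Bayesian instances in which the optimal contract extracts the full welfare of a single ``target'' action while every linear contract must settle among a geometric family of distractors and loses an $\Omega(n)$ factor, then ensure $OPT_n = \Theta(2^{-cn})$ — is exactly the route the paper takes, and your rescaling observation is a valid alternative to building the exponential scale directly into the probabilities as the paper does. However, you explicitly defer the technical heart, and this deferral is a genuine gap: exhibiting properties (i) and (ii) simultaneously is where essentially all of the work lies. The construction is \emph{not} a literal lift of a Dütting et al.\ tightness example; the paper adds a third outcome $\omega_2$ of zero reward that is reached only by the target action $a_1$, so that an optimal non-linear contract can pay the agent on $\omega_2$ at zero cost to the principal, while any linear contract pays $\alpha r_{\omega_2}=0$ there and is instead pinned down by the chain of IC constraints $a_1 \succ a_2 \succ \cdots \succ a_\gamma$, which forces $p_{\omega_1}\ge 1-2^{i-\gamma-1}$ whenever $a_i$ is implemented and hence utility at most $2^{-\gamma-1}$. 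Without this ``hidden'' outcome the optimal and the best linear contract coincide (as you yourself note for two outcomes), so one cannot treat the base instance as a black box borrowed from elsewhere.

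Two smaller issues in the ``mechanical'' part. First, your calculation needs the absolute constant $C>1$, but if the base family has multiplicative loss exactly $n$ (as a tight $n$-approximation example typically would), then for integer $\rho$ and $n=\lceil\rho\rceil$ the quantity $OPT_n/\rho - OPT_n/(Cn)$ collapses to zero; the paper sidesteps this by taking the number of distractor actions to be $\gamma=\lfloor 4\rho\rfloor$, so that the linear loss is at least $2\rho$ and there is genuine slack. Second, the remark that small $\rho$ ``can be handled by shifting $e$'' changes the quantified statement rather than proving it; the paper instead verifies the single concrete inequality $\tfrac{OPT}{\rho}-2^{-4\rho-2} > 2^{-\lfloor 4\rho\rfloor-1}$ uniformly for all $\rho \ge 1$.
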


\begin{proof}
	Given any $\rho \ge 1$, we show that there exists a non-Bayesian setting $(A,\Omega)$ in which, using a linear contract, it is impossible to obtain a $\left( \rho, 2^{-4 \rho -2} \right)$-approximation of the optimal expected utility for the principal.
	In these instances, an optimal (non-linear) contract provides the principal with an expected utility $OPT > 4\rho 2^{-\lfloor4 \rho\rfloor-2},$ while the best linear contract achieves at most $2^{-\lfloor 4  \rho\rfloor-1}$ utility.
	Since $\frac{ OPT}{\rho} -2^{-4 \rho -2}>2^{-\lfloor 4  \rho\rfloor-1}$, this concludes the proof.
	Formally, let us take $\Omega=\{\omega_1, \omega_2,\omega_3\}$ and $A=\{a_i\}_{i \in [\gamma]} \cup \{ \bar a \}$, where $\gamma \coloneqq \lfloor 4 \rho \rfloor$.
	Let $r_{\omega_1} = 1$ be the reward of outcome $\omega_1$, while the other outcomes provide zero reward, namely $r_{\omega_2} = r_{\omega_3} = 0$.
	The agent's actions are such that $F_{a_1,\omega_1}=\frac{1}{2}$ and $F_{a_1,\omega_2}=\frac{1}{2}$, while $F_{a_i,\omega_1}=2^{-i}$ and  $F_{a_i,\omega_3}=1-2^{-i}$ for every $i\in [\gamma]: i > 1$.
	Each action $a_i \in A$ has cost $c_{a_i}=2^{-i}-(\gamma-i+2)2^{-\gamma-2}$.
	Moreover, the last action $\bar a \in A$ is such that $F_{\bar a,\omega_3}=1$ and $c_{\bar a} = 0$ (ensuring IR).
	Simple arguments show that an optimal contract sets payments $p_{\omega_1} = p_{\omega_3} = 0$ and $p_{\omega_2}=1-(\gamma + 1) 2^{-\gamma-1}$, implementing action $a_1$ with a principal's expected utility $(\gamma+1) 2^{-\gamma-2}\ge 4\rho2^{-\lfloor 4  \rho \rfloor-2}$.
	Intuitively, the contract is such that $P_{a_1} = c_{a_1}$, which ensures that the agent plays action $a_1$ (it is IC and ties are broken in favor of the principal), while minimizing the payment.
	Next, we show that any linear contract provides the principal with an expected utility at most $2^{-\gamma-1}$.
	First, let us notice that the principal's expected utility when implementing action $a_\gamma$ is at most $R_{a_\gamma}-c_{a_\gamma}=2^{-\gamma-1}$.
	Instead, suppose that a linear contract implements an action $a_{i} \in A$ such that $i \in [\gamma] : 1 < i < \gamma$.
	Then, it must be the case that $a_{i}$ provides the agent with an expected utility greater than or equal to that obtained by $a_{i+1}$, \emph{i.e.}, it must be $2^{-i} p_{\omega_1}+ \left( 1-2^{-i} \right)	p_{\omega_3} -c_{a_i} \geq 2^{-i-1} p_{\omega_1} + \left( 1-2^{-i-1} \right)	p_{\omega_3} - c_{a_{i+1}}$. Thus: 
	\[
		2^{-i} p_{\omega_1} - \left( 2^{-i}-(\gamma-i+2) 2^{-\gamma-2} \right) \ge 2^{-i-1} p_{\omega_1} - \left( 2^{-i-1}-(\gamma-i+1) 2^{-\gamma-2} \right), 
	\]
	which implies that $p_{\omega_1}\ge 1-2^{i+1-\gamma-2}$.
	This prove that the expected utility for the principal is at most $\left( 1-p_{\omega_1} \right) 2^{-i}\le 2^{-\gamma-1}=2^{-\lfloor 4 \rho \rfloor-1} $, concluding the proof.
\end{proof}

\section{Linear versus Tractable Contracts}\label{sec:hardness}

In this section, we show that linear contracts have the same worst-case performance as efficiently-computable ones.
%
In Section~\ref{sec:hardness_mult}, we prove that there is no tractable contract providing a multiplicative loss sublinear in the number of agent's types.
This shows that, even if linear contracts provide a bad multiplicative loss in the number of agent's types (Theorem~\ref{thm:linear_lower_bound}), this is also true for tractable contracts.
%
%
Then, in Section~\ref{sec:hardness_bi_apx} we show that the $\big( \rho, 2^{-\Omega \left( \rho \right)} \big)$-bi-approximation result for linear contacts (Theorem~\ref{thm:mult_add_bayes}) is the best one can possibly achieve in polynomial time.
Technically, it is \textsf{NP}-hard to compute a contract providing a $\big(\rho, 2^{-\omega( \rho)}\big)$-bi-approximation of an optimal one.

\subsection{The Limits of Tractable Contracts}\label{sec:hardness_mult}

In the following Theorem~\ref{thm:hard_independent}, we prove that it is \textsf{NP}-hard to design a contract that approximates the overall principal's expected utility in an optimal contract up to within any multiplicative factor that is sublinear in the number of agent's types $\ell$.
The theorem is based on a reduction from GAP-INDEPENDENT-SET, which is the promise problem of deciding, in a given graph, whether there exists an independent set involving at least some (large) fraction of nodes or all the independent sets encompass at most some (small) fraction of nodes~\citep{Zuckerman2007linear}.

\begin{theorem}\label{thm:hard_independent}
	In Bayesian principal-agent problems, for any $\epsilon>0$ it is \textnormal{\textsf{NP}}-hard to design a contract providing a multiplicative loss $O \left( \ell^{1-\epsilon} \right)$ of an optimal one, where $\ell$ is the number of agent's types.
\end{theorem}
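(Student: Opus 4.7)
The plan is a polynomial-time reduction from GAP-INDEPENDENT-SET. By Zuckerman's theorem, for any $\epsilon' > 0$ it is \textsf{NP}-hard to distinguish, for an $n$-vertex graph $G = (V, E)$, whether $\alpha(G) \geq n^{1-\epsilon'}$ (YES) or $\alpha(G) \leq n^{\epsilon'}$ (NO). Setting $\epsilon' = \epsilon/2$, the goal is to construct a Bayesian principal-agent instance $\mathcal{I}(G)$ with $\ell = n$ types whose optimal principal utility differs by a multiplicative factor $\Omega(n^{1-2\epsilon'}) = \Omega(\ell^{1-\epsilon})$ between the two cases; an approximation algorithm with multiplicative loss $o(\ell^{1-\epsilon})$ would then decide the promise problem in polynomial time.

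I would set up $\mathcal{I}(G)$ with one type $\theta_v$ per vertex (uniform $\mu_v = 1/n$), outcomes $\{\omega_v\}_{v \in V} \cup \{\omega_0\}$ with $r_{\omega_v} = 1$ and $r_{\omega_0} = 0$, and parameters $q := n^{\epsilon'-1}$ and $c := 1 - q$. Each type $\theta_v$ has: a \emph{target} action $a_v$ that deterministically yields $\omega_v$ at cost $c$; for each neighbor $u \sim v$, a randomized \emph{imitation} action $b_v^u$ that yields $\omega_u$ with probability $q$ and $\omega_0$ otherwise, at cost $0$; and a null action yielding $\omega_0$ at cost $0$ (enforcing Assumption~\ref{ass:ir}). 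The IC constraint for implementing $a_v$ is $p_{\omega_v} \geq c + q\, p_{\omega_u}$ for every $u \sim v$. For two adjacent vertices that both implement their target, adding the symmetric inequalities forces $p_{\omega_v} + p_{\omega_u} \geq 2c/(1-q) = 2$, so the per-vertex utility $1 - p_{\omega_v}$ sums to $\leq 0$ across the pair. A direct extension shows that every implementation containing an edge is weakly dominated by removing that edge, so the principal's optimum is attained by choosing an independent set $S \subseteq V$ and setting $p_{\omega_v} = c$ for $v \in S$, $p_{\omega_v} = 0$ elsewhere. Under this contract, types in $S$ play $a_v$ and contribute $q/n$ each; types in $N(S) \setminus S$ play an imitation action toward some $u \in S$ and contribute $q^2/n$ each (since $\omega_u$ fires only with probability $q$); the remaining types play the null action.

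Consequently $OPT(\mathcal{I}(G)) = \max_{S\text{ indep}} \bigl[\, q\,|S|/n + q^2\,|N(S)\setminus S|/n \,\bigr]$, which is $\Omega(q \cdot n^{-\epsilon'}) = \Omega(n^{-1})$ in the YES case and $O(q \cdot n^{\epsilon'-1} + q^2) = O(n^{2\epsilon'-2})$ in the NO case, giving the desired multiplicative gap $\Omega(\ell^{1-\epsilon})$. The main obstacle is rigorously carrying out the soundness direction: ruling out that the principal exceeds the stated upper bound via non-independent implementations, via paying non-neighbors, or via inducing unusual mixed responses from the agent. The key technical step is the clique-payment lemma above, complemented by a case analysis of the agent's best response showing (i) every adjacent pair of implemented targets contributes net utility at most $0$, and (ii) every type outside the implemented independent set contributes at most $q^2/n$ to the principal. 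Plugging these bounds into the Zuckerman gap yields the claimed $\Omega(\ell^{1-\epsilon})$ inapproximability.
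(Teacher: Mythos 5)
Your reduction has a genuine gap in the soundness direction, and in fact the constructed instance does not exhibit the intended multiplicative gap between the \textsf{YES} and \textsf{NO} cases. Consider the all-zeros contract $p \equiv 0$ in your instance $\mathcal{I}(G)$. For every type $\theta_v$, the target $a_v$ yields utility $-c < 0$, each imitation action $b_v^u$ yields utility $q\cdot 0 = 0$, and the null action yields $0$; the agent thus ties between imitation and null, and by the tie-breaking rule plays an imitation action (which gives the principal positive reward). The principal's expected utility from $\theta_v$ is $R_{\theta_v, b_v^u} - P_{\theta_v, b_v^u} = q - 0 = q$, so the overall utility is $q = n^{\epsilon'-1}$ \emph{independently of the graph}. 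This already exceeds your stated \textsf{NO}-case bound $O(n^{2\epsilon'-2})$ by a factor of $n^{1-\epsilon'}$, and it also exceeds your claimed \textsf{YES}-case benchmark of $\Omega(n^{-1})$, so the ``clique-payment lemma'' and the independent-set contract you exhibit are irrelevant to $OPT$: a far better contract ignores the independence structure entirely. The root cause is that your zero-cost imitation actions with constant success probability $q$ let the principal extract utility $q$ per type \emph{for free}, swamping any graph-dependent signal.

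The paper's construction avoids exactly this. Rather than a single cheap defection action per neighbor, it places a geometric \emph{chain} of actions $a_{ui}$, $i \in [\ell-1]$, with exponentially decreasing success probabilities $2^{-i-1}$ and costs $c_{\theta_v,a_{ui}} = 2^{-i-1} - (\ell-i)2^{-\ell-1}$ carefully tuned so that consecutive actions in the chain are in tension: any contract that implements $a_{ui}$ must, by the IC constraint against $a_{u,i+1}$, pay so much that the principal's expected utility from $a_{ui}$ is at most $2^{-\ell}$ (and the last action $a_{u,\ell-1}$ has $R - c = 2^{-\ell-1}$ anyway). This guarantees, for \emph{every} contract, that the only way the principal can extract a non-negligible per-type utility ($\ell 2^{-\ell-1}$) is via the ``good'' action $\bar a$, and the independence constraint is then enforced via payments on the reward-zero outcomes $\bar\omega_v,\bar\omega_u$. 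Without such a chain (or an equivalent device forcing up the payment on any defection action), a single zero-cost defection action will always leak constant utility to the principal under the zero contract, which is precisely what breaks your argument. To salvage the approach you would need to give the imitation actions positive costs \emph{and} add enough additional actions so that, for every contract and every type not in the implemented independent set, the agent's IC/IR constraints force the principal's leftover utility to be negligible; that is essentially what the paper's chain accomplishes.
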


\begin{proof}
	We reduce from GAP-INDEPENDENT-SET, which is a promise problem that formally reads as follows: given $\epsilon > 0$ and a graph $G= (V,E)$, with set of nodes $V$ and set of edges $E$, determine whether $G$ admits an independent set of size at least $|V|^{1-\epsilon}$ or all the independent sets of $G$ have size smaller than $|V|^{\epsilon}$.
	The Bayesian principal-agent instances in the reduction have a number of agent's types $\ell = |V|$.
	The main idea of the proof is to show that, provided $\ell$ is large enough, if $G$ admits an independent set of size at least $\ell^{1-\epsilon}$, then in the corresponding principal-agent setting there exists a contract in which the overall principal's expected utility is at least $\frac{1}{2} \, \ell^{1-\epsilon} \, 2^{-\ell-1}$; otherwise, the utility is at most $2 \, \ell^{\epsilon} \, 2^{-\ell-1}$ for any contract.
	Since GAP-INDEPENDENT-SET is \textnormal{\textsf{NP}}-hard for every $\epsilon>0$~\citep{hastad1999clique,Zuckerman2007linear}, this is enough to prove the statement.
	
	\paragraph{Construction}
	Given a graph $G=(V,E)$, we build a Bayesian principal-agent setting $(\Theta, A, \Omega)$ as follows.
	For every node $v \in V$ in the graph $G$, there are two outcomes $\omega_v, \bar \omega_v \in \Omega$ such that $r_{\omega_v} = 1 $ and $r_{\bar \omega_v} = 0$.
	Moreover, there is an additional auxiliary outcome $\bar \omega \in \Omega$ with $r_{\bar \omega} = 0$.
	The agent type is uniformly selected from a set $\Theta = \{ \theta_v \}_{v \in V}$ of $\ell = |V|$ different types, each corresponding to a node in the graph.
	Thus, the distribution $\mu \in \Delta_{\Theta}$ is such that $\mu_{\theta_v} = \frac{1}{\ell}$ for every $\theta_v \in \Theta$.
	The agent has $m = \ell^2-\ell+1$ actions available.
	There is an action $\bar a \in A$ that induces a distribution over outcomes $F_{\theta_v, \bar a}$ with $F_{\theta_v, \bar a, \omega_v} =F_{\theta_v, \bar a, \bar \omega_v} = \frac{1}{2}$ and has cost $c_{\theta_v, \bar a} = \frac{1}{2} - \ell 2^{-\ell-1}$, no matter the agent's type $\theta_v \in \Theta$.
	Each of the remaining $\ell^2 - \ell$ actions, denoted as $a_{ui} \in A$, corresponds to a node $u \in V$ and an index $i \in [\ell-1]$.
	They are characterized by outcome distributions defined as follows.
	For every agent's type $\theta_v \in \Theta$ and action $a_{ui} \in A$ (with $v,u \in V$), the distribution $F_{\theta_v, a_{ui}}$ is such that:
	\begin{itemize}
		\item If $(v,u) \in E$, then the reachable outcomes are $\omega_v$, $\bar \omega_v$, $\bar \omega_{u}$, and $\bar \omega$, which are reached with probabilities, respectively, $F_{\theta_v,a_{ui}, \omega_v}=2^{-i-1}$, $F_{\theta_v,a_{ui},\bar \omega_v}=F_{\theta_v,a_{ui},\bar \omega_{u}}=\frac{2}{3}2^{-i-1}$, and $F_{\theta_v,a_{ui},\bar \omega}= 1-\frac{7}{3}2^{-i-1}$;
		\item If $(v,u) \notin E$, then the reachable outcomes are $\omega_v$ and $\bar \omega$, which are reached with probabilities $F_{\theta_v,a_{ui},\omega_v}=2^{-i-1}$ and $F_{\theta_v,a_{ui},\bar \omega}= 1-2^{-i-1}$, respectively.
	\end{itemize}
	Moreover, the cost of each action $a_{ui} \in A$ is $c_{\theta_v,a_{ui}}=2^{-i-1}-(\ell-i) 2^{-\ell-1}$, for any type $\theta_v \in \Theta$.

	\paragraph{Overview}
	In the instances of the reduction, the principal's expected utility contribution due to an agent's type playing an action $a_{u i} \in A$ is small.
	Thus, the principal's objective is to incentivize as many agent's types as possible to play $\bar a$, which is the action with the greatest difference between expected reward and cost.
	In order to incentivize an agent of type $\theta_v \in \Theta$ to play $\bar a$ rather than an action $a_{u 1} \in A$, while still achieving a satisfactory expected utility from that, the principal must set some (large) payment on outcome $\bar \omega_v$, some (small) payment on outcome $\omega_v$, and no payment on $\bar \omega$.
	Indeed, rewarding the last two outcomes prevents from reaching the desired principal's expected utility.
	Moreover, the principal must \emph{not} set payments on outcomes $\bar \omega_u$ such that vertex $u$ is adjacent to $v$, otherwise an agent of type $\theta_v \in \Theta$ would be incentivized to play action $a_{u 1}$ rather $\bar a$.
	This implies that the principal can extract a satisfactory utility only from agent's types whose corresponding vertices constitute an independent set of the graph $G$. 
	%

	\paragraph{Completeness}
	Suppose that graph $G$ admits an independent set of size at least $|V|^{1-\epsilon}$.
	Then, there exists a maximal independent set $V^\star \subseteq V$ of size $|V^\star| \geq |V|^{1-\epsilon}$ such that, for every node $v \notin V^\star$, there is a node $u \in V^\star$ with $(v,u) \in E$.
	Let us define a contract with $p_{\bar \omega_v} = 1-\ell \, 2^{-\ell-1}$ for all $v \in V^\star$ and $p_{ \omega_v}=\frac{1}{3} \left( 1-\ell 2^{-\ell-1} \right)+\ell 2^{-\ell-1}$ for all $v  \notin V^\star$, while all the other payments are set to $0$.
	First, we show that, given this contract, any agent of type $\theta_v \in \Theta$ with $v \in V^\star$ is incentivized to play action $\bar a$.
	The expected utility of the agent by playing $\bar a$ is:
	\[
		P_{\theta_v,\bar a} - c_{\theta_v, \bar a} = \frac{1}{2} \left( 1-\ell \, 2^{-\ell-1} \right)- \left( \frac{1}{2}-\ell \, 2^{-\ell-1} \right)=\frac{\ell}{2}\, 2^{-\ell-1}.
	\]
	As for the expected utility of playing an action $a_{u i} \in A$, two cases are possible.
	If $(v,u) \notin E$, then $P_{\theta_v, a_{ui}} = 0$ (since there is no payment associated to $\omega_v$, being $v \in V^\star$); thus, the resulting agent's expected utility is negative.
	Instead, if $(v,u) \in E$, then, by definition of independent set, it must be the case that $u \notin V^\star$, which implies that the only reachable outcome having non-zero payment is $\bar \omega_v$. Thus, in this case the agent's expected utility is:
	\[
		P_{\theta_v,a_{ui}} - c_{\theta_v,  a_{ui}} \hspace{-1mm}=\frac{2}{3} \ 2^{-i-1}  \left( 1-\ell 2^{-\ell-1} \right) -\left[ 2^{-i-1}\hspace{-1mm}-(\ell-i)2^{-\ell-1}) \right]\hspace{-1mm}= 2^{-\ell-1} \left( \ell-i- \frac{2}{3} \ \ell2^{-i-1}- \frac{1}{3} \ 2^{-i+\ell} \right).
	\]
	Then, for $i \geq \frac{\ell}{2}$, it immediately follows that:
	\[
		P_{\theta_v,a_{ui}} - c_{\theta_v,  a_{ui}}  = 2^{-\ell-1} \left( \ell-i-\frac{2}{3}\ \ell2^{-i-1}-\frac{1}{3} \ 2^{-i+\ell} \right) \le \frac{\ell}{2}2^{-\ell-1} = P_{\theta_v,\bar a} - c_{\theta_v, \bar a},
	\]
	while, for $i \leq \frac{\ell}{2}$, the same result follows from the fact that, provided $\ell$ is large enough, it holds:
	\[
		2^{-\ell-1}\left( \ell-i-\frac{2}{3}\ \ell2^{-i-1}-\frac{1}{3} \ 2^{-i+\ell} \right) \le    2^{-\ell-1} \left( \ell - \frac{1}{3} \ 2^{\frac{\ell}{2}} \right)\le \frac{\ell}{2}2^{-\ell-1} ,
	\]
	where the last inequality holds since $2^{\frac{\ell}{2}} \geq \frac{\ell}{2}$ for a sufficiently large $\ell$.
	This shows that any agent of type $\theta_v$ with $v \in V^\star$ plays action $\bar a$.
	Next, we prove that, whenever the agent's type $\theta_v \in \Theta$ is such that $v \notin V^\star$, then the agent plays an action $a_{u1}$ associated with a node $u \in V$ such that $(v,u) \in E$ and $u \in V^\star$.
	Notice that one such node always exists since $V^\star$ is a maximal independent set.
	Given that $p_{\bar \omega_v} = 0$, $p_{\bar \omega_{u}}= 1-\ell 2^{-\ell-1}$, and $p_{\omega_v}=\frac{1}{3} \left( 1-\ell 2^{-\ell-1} \right)+\ell 2^{-\ell-1}$, the expected utility of the agent by playing action $a_{u1}$ is:
	\begin{align*}
		P_{\theta_v, a_{u1}} - c_{\theta_v, a_{u1}} & = \frac{2}{3}2^{-2} \left( 1-\ell 2^{-\ell-1} \right) + 2^{-2} \left[ \frac{1}{3} \left( 1-\ell 2^{-\ell-1} \right) + \ell 2^{-\ell-1} \right]- \left[ 2^{-2}-(\ell-1)2^{-\ell-1} \right]=\\
		& = (\ell-1)2^{-\ell-1}.
	\end{align*}
	On the other hand, any action $a_{ui}$ provides the agent with an expected utility:
	\begin{align*}
		P_{\theta_v, a_{ui}} - c_{\theta_v, a_{ui}} &= \frac{2}{3}2^{-i-1} \left( 1-\ell 2^{-\ell-1} \right) 	+ \\
		& \textcolor{white}{=\qquad} + 2^{-i-1} \left[ \frac{1}{3} \left( 1-\ell 2^{-\ell-1} \right) + \ell 2^{-\ell-1} \right]- \left[ 2^{-i-1}-(\ell-i)2^{-\ell-1} \right] = \\
		& = (\ell-i)2^{-\ell-1}\le (\ell-1)2^{-\ell-1} = P_{\theta_v, a_{u1}} - c_{\theta_v, a_{u1}}.
	\end{align*}
	Moreover, it is easy to check that action $\bar a$ provides the agent with a negative utility, showing that any agent of type $\theta_v$ with $v \notin V^\star$ plays an action $a_{u1}$.
	Finally, we can conclude that the overall principal's expected utility is:
	\begin{align*}
		&\sum_{v \in V^\star } \mu_{\theta_v} \left( R_{\theta_v, \bar a} -P_{\theta_v, \bar a} \right) + \sum_{v \notin V^\star }  \mu_{\theta_v} \left( R_{\theta_v,  a_{u1}} -P_{\theta_v,  a_{u1}} \right)= \\
		&\qquad \,\, =\sum_{v \in V^\star } \frac{1}{\ell} \left[ \frac{1}{2}-\frac{1}{2} \left( 1-\ell 2^{-\ell-1} \right)\right] + \\
		&\qquad\,\, \textcolor{white}{=}\qquad + \sum_{v \notin V^\star } \frac{1}{\ell}  \left[  2^{-2} -\frac{2}{3}2^{-2} \left( 1-\ell 2^{-\ell-1} \right) - \frac{1}{3} 2^{-2}  \left( 1-\ell 2^{-\ell-1} \right) - 2^{-2} \ell 2^{-\ell-1}   \right] \geq \\
		&\qquad\,\, \geq \frac{1}{\ell} \ell^{1 -\epsilon}\left[ \frac{1}{2}-\frac{1}{2} \left( 1-\ell 2^{-\ell-1} \right)\right]  + \\ 
		&\qquad\,\, \textcolor{white}{=}\qquad +\frac{1}{\ell} \left( \ell - \ell^{1-\epsilon} \right) \left[  2^{-2} -\frac{2}{3}2^{-2} \left( 1-\ell 2^{-\ell-1} \right) - \frac{1}{3} 2^{-2}  \left( 1-\ell 2^{-\ell-1} \right) - 2^{-2} \ell 2^{-\ell-1}   \right]  = \\
		&\qquad\,\, = \frac{1}{\ell} \ell^{1 -\epsilon} \frac{\ell}{2}2^{-\ell-1} = \frac{1}{2}\ell^{1-\epsilon} 2^{-\ell-1},
	\end{align*}
	where we used the fact that $|V^\star| \geq |V|^{1-\epsilon} = \ell^{1-\epsilon}$ by assumption.

	\paragraph{Soundness}
	We start showing that, if the principal deploys a contract that implements an action $a_{ui} \in A$, then she achieves an expected utility $R_{\theta_v, a_{ui}} - P_{\theta_v, a_{ui}} \leq 2^{-\ell}$, no matter the agent's type $\theta_v \in \Theta$.
	First, notice that, by implementing actions $a_{u \, \ell-1} \in A$, the principal can obtain an expected utility at most of $R_{\theta_v, a_{u \, \ell-1}} - c_{\theta_v, a_{u \, \ell-1}} = 2^{-\ell}-2^{-\ell-1}=2^{-\ell-1}$ (due to IR constraints).
	Next, suppose that the contract implements an action $a_{u i} \in A$ with $i \in [\ell-2]$ for an agent of type $\theta_v \in \Theta$.
	Then, by IC constraints, action $a_{u i}$ must provide the agent with an expected utility greater than or equal to that achieved by playing $a_{u \, i+1}$, \emph{i.e.}, it must be $P_{\theta_v, a_{ui}} - c_{\theta_v, a_{ui}} \geq P_{\theta_v, a_{u \, i+1}} - c_{\theta_v, a_{u \, i+1}}$.
	Two cases are possible.
	In the first one, it holds $(v,u) \in E$, which implies that:
	\begin{align*}
			&p_{\omega_v} 2^{-i-1} + p_{\bar \omega_v} \frac{2}{3} 2^{-i-1} + p_{\bar \omega_u} \frac{2}{3} 2^{-i-1}	+ p_{\bar \omega} \left(  1 - \frac{7}{3} 2^{-i-1} \right) - \left[  2^{-i-1} - \left( \ell-i \right) 2^{-\ell-1} \right] \geq \\
			& \,\, \geq p_{\omega_v} 2^{-i-2} + p_{\bar \omega_v} \frac{2}{3} 2^{-i-2} + p_{\bar \omega_u} \frac{2}{3} 2^{-i-2}	+ p_{\bar \omega} \left(  1 - \frac{7}{3} 2^{-i-2} \right) - \left[  2^{-i-2} - \left( \ell-i-1 \right) 2^{-\ell-1} \right].
	\end{align*}
	Thus,
	\[
		2^{-i-1} \left(   p_{\omega_v} + \frac{2}{3} p_{\bar \omega_v} + \frac{2}{3} p_{\bar \omega_{u}} - \frac{7}{3} p_{\bar \omega} \right) \geq 2^{-i-2} - 2^{-\ell-1}.
	\]
	As a result, the expected utility of the principal when an agent of type $\theta_v$ plays an action $a_{u i}$ is:
	\begin{align*}
		R_{\theta_v, a_{ui}} - P_{\theta_v, a_{ui}} & = 2^{-i-1} - \left[ p_{\omega_v} 2^{-i-1} + p_{\bar \omega_v} \frac{2}{3} 2^{-i-1} + p_{\bar \omega_u} \frac{2}{3} 2^{-i-1}	+ p_{\bar \omega} \left(  1 - \frac{7}{3} 2^{-i-1} \right)    \right] \leq \\
		& \leq 2^{-i-1} - \left( p_{\omega_v} 2^{-i-1} + p_{\bar \omega_v} \frac{2}{3} 2^{-i-1} + p_{\bar \omega_u} \frac{2}{3} 2^{-i-1}	-p_{\bar \omega} \frac{7}{3} 2^{-i-1}     \right) \leq \\
		& \leq 2^{-i-1} - 2 \left(  2^{-i-2} - 2^{-\ell-1} \right) = 2^{-\ell}.
	\end{align*}
	A similar argument holds for the case in which $(v,u) \notin E$.
	Now, given a contract, let $\Theta^\star \subseteq \Theta$ be the set of agent's types $\theta_v$ such that: \emph{(i)} the contract implements action $\bar a$ for an agent of type $\theta_v$; and \emph{(ii)} the principal achieves an expected utility strictly larger than $2^{-\ell}$ when an agent of type $\theta_v$ plays $\bar a$.
	We prove that, for any contract, any pair of types $\theta_v, \theta_u \in \Theta^\star$ is such that $(v,u) \notin E$.
	By contradiction, suppose that there exist $\theta_v, \theta_u \in \Theta^\star$ such that $(u,v) \in E$.
	We distinguish two cases.
	The first one is when $p_{\bar \omega_v} \le p_{\bar \omega_{u}}$.
	Since action $\bar a $ must provide an agent of type $\theta_v$ with an expected utility greater than or equal to that obtained for action $a_{u1}$ (by IC constraints), we have that:
	\[
		\frac{1}{2} p_{\omega_v} + \frac{1}{2} p_{\bar \omega_v} - \left( \frac{1}{2} - \ell 2^{-\ell-1}  \right)\geq 2^{-2} p_{\omega_v} + \frac{2}{3} 2^{-2}  \left(  p_{\bar \omega_v} + p_{\bar \omega_{u}} \right) + \left( 1 - \frac{7}{3} 2^{-2} \right) p_{\bar \omega} - \left[  2^{-2} - \left( \ell-1 \right) 2^{-\ell-1}\right].
	\]
	By using the fact that $p_{\bar \omega_u} \geq p_{\bar \omega_{v}} \geq 0$ and $p_{\bar \omega} \geq 0$, and re-arranging the terms, we obtain that $2^{-2} \left(  p_{\omega_{v}} + p_{\bar \omega_{v}} \right) \geq 2^{-2} - 2^{-\ell-1}$, which implies that $P_{\theta_v, a_{\bar v}} = \frac{1}{2} p_{\omega_{v}} + \frac{1}{2} p_{\bar \omega_{v}} \geq \frac{1}{2} - 2^{-\ell}$.
	Since $R_{\theta_v, \bar a} = \frac{1}{2}$, this results in a principal's expected utility at most of $2^{-\ell}$, which is a contradiction.
	In the second case in which $p_{\bar \omega_v} \geq p_{\bar \omega_{u}}$, we reach a contradiction using an analogous argument for an agent of type $\theta_u$ (rather than $\theta_v$).
	Thus, we can conclude that, for any contract, the set of nodes $v \in V$ such that $\theta_v \in \Theta^\star$ constitutes an independent set of the graph $G$.
	Moreover, notice that the maximum expected utility that the principal can obtain when an agent of type $\theta_v \in \Theta$ plays action $\bar a$ is $R_{\theta_v, \bar a} - c_{\theta_v, \bar a} = \ell 2^{-\ell-1}$.
	Since, by assumption, the largest independent set of $G$ has size at most $|V|^\epsilon = \ell^\epsilon$, we can conclude that in any contract the overall expected utility of the principal is:
	\begin{align*}
		\sum_{\theta_v \in \Theta^\star} \mu_{\theta_v} \left( R_{\theta_v, \bar a} - P_{\theta_v, \bar a}  \right) + \sum_{\theta_v \in \Theta \setminus \Theta^\star} \mu_{\theta_v} \left( R_{\theta_v,  a^*(\theta_v)} - P_{\theta_v,  a^*(\theta_v)}  \right) & \leq  \frac{1}{\ell} \ell^\epsilon \ell 2^{-\ell-1} + \frac{1}{\ell} \left( \ell -\ell^\epsilon  \right) 2^{-\ell} \leq \\
		& \leq 2\,  \frac{1}{\ell} \, \ell^{1+\epsilon}  \, 2^{-\ell-1} = \\
		& = 2 \, \ell^\epsilon \, 2^{-\ell-1},
	\end{align*}
	where the last inequality holds provided that $\ell$ is sufficiently large.
\end{proof}

\subsection{The Limits of Bi-Approximations}\label{sec:hardness_bi_apx}

We show that, for any $\rho\ge 1$, it is \textnormal{\textsf{NP}}-hard to design a contract providing a $\big(\rho,2^{-\omega(\rho)} \big)$-bi-approximation of an optimal one.
To this end, we employ a reduction from a {promise problem} associated with \textsf{LABEL-COVER} instances, whose definition follows.

\begin{definition}[\textnormal{\textsf{LABEL-COVER}} instance]
	An instance of \textnormal{\textsf{LABEL-COVER}} is a tuple $( G, \Sigma, \Pi)$:
	\begin{itemize}
		\item $G \coloneqq (U,V,E)$ is a \emph{bipartite graph} defined by two disjoint sets of nodes $U$ and $V$, connected by the edges in $E \subseteq U \times V$, which are such that all the nodes in $U$ have the same degree;
		\item $\Sigma$ is a finite set of \emph{labels}; and
		\item $\Pi \coloneqq \left\{ \Pi_e : \Sigma \to \Sigma \mid e \in E \right\}$ is a finite set of \emph{edge constraints}.
	\end{itemize}
	Moreover, a \emph{labeling} of the graph $G$ is a mapping $\pi: U \cup V \to \Sigma$ that assigns a label to each vertex of $G$ such that all the edge constraints are satisfied.
	Formally, a labeling $\pi$ satisfies the constraint for an edge $e =(u,v) \in E$ if it holds that $ \pi(v) = \Pi_e(\pi(u))$.
\end{definition}

The classical \textsf{LABEL-COVER} problem is the search problem of finding a valid labeling for a \textsf{LABEL-COVER} instance given as input.
In the following, we consider a different version of the problem, which is the {promise problem} associated with \textsf{LABEL-COVER} instances.

\begin{definition}[\textnormal{\textsf{GAP-LABEL-COVER}$_{c,s}$}]
	For any pair of numbers $0 < s < c < 1$, we define \textnormal{\textsf{GAP-LABEL-COVER}}$_{c,s}$ as the following promise problem.
	\begin{itemize}
		\item \textnormal{\texttt{Input:}} An instance $(G,\Sigma, \Pi)$ of \textnormal{\textsf{LABEL-COVER}} such that either one of the following is true:
		\begin{itemize}
			\item there exists a labeling $\pi$ that satisfies at least a fraction $c$ of the edge constraints in $\Pi$;
			\item any labeling $\pi$ satisfies less than a fraction $s$ of the edge constraints in $\Pi$.
		\end{itemize}
		\item \textnormal{\texttt{Output:}} Determine which of the above two cases hold.
	\end{itemize}
\end{definition}

In order to prove Theorem~\ref{thm:hard_label}, we use the following result due to~\citet{raz1998parallel}~and~\citet{arora1998proof}.

\begin{theorem}[\citet{raz1998parallel,arora1998proof}]\label{thm:hard_gap_label}
	For any $\epsilon > 0$, there exists a constant $k_\epsilon \in \mathbb{N}$ that depends on $\epsilon$ such that the promise problem \textnormal{\textsf{GAP-LABEL-COVER}}$_{1,\epsilon}$ restricted to inputs $(G, \Sigma, \Pi)$ with $|\Sigma| = k_\epsilon$ is \textnormal{\textsf{NP}}-hard.
\end{theorem}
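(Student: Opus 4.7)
The plan is to derive the statement by composing two classical ingredients: an NP-hardness of gap-\textsf{LABEL-COVER} with only a \emph{constant} soundness-completeness gap coming from the PCP theorem (Arora--Safra, Arora--Lund--Motwani--Sudan--Szegedy), and an amplification of that gap via Raz's parallel repetition theorem. Both are used as black boxes; the work consists in (i) encoding the starting NP-hard gap-CSP as a projection \textsf{LABEL-COVER} with perfect completeness, and (ii) controlling how the alphabet size $k_\epsilon$ depends on $\epsilon$ after repetition.

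First I would establish a base case: NP-hardness of \textsf{GAP-LABEL-COVER}$_{1,\,1-\delta'}$ for some absolute constant $\delta' > 0$. Starting from the PCP theorem, \textsf{GAP-3SAT}$_{1,\,1-\delta}$ is NP-hard for an absolute constant $\delta > 0$. Given a 3SAT instance $\varphi$ with variables $X$ and clauses $C$, build a bipartite graph $G=(U,V,E)$ with $U=C$ and $V=X$ (duplicating $V$-vertices as needed to make every $u \in U$ have the same degree), with $u \in U$ connected to its three variables. Let $\Sigma = \{0,1\}^3$: a label at $u$ encodes a local assignment to the three variables of clause $u$, a label at $v$ is a Boolean value padded into $\{0,1\}^3$. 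For $e=(u,v) \in E$, the projection constraint $\Pi_e$ reads off the coordinate of $u$'s assignment corresponding to $v$ and requires equality with $v$'s label, and also requires $u$'s assignment to satisfy the clause. A standard counting argument then shows: if $\varphi$ is satisfiable, every edge constraint can be satisfied (perfect completeness); if no assignment satisfies more than a $(1-\delta)$ fraction of the clauses, then no labeling can satisfy more than a $(1-\delta')$ fraction of the edge constraints, for an explicit constant $\delta' > 0$ depending only on $\delta$.

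Next I amplify. Given a \textsf{LABEL-COVER} instance $(G,\Sigma,\Pi)$ with soundness $1-\delta'$ and perfect completeness, form its $t$-fold parallel repetition $(G^{\otimes t}, \Sigma^t, \Pi^{\otimes t})$: vertices are $t$-tuples over $U$ and over $V$, edges are $t$-tuples of edges, and an edge is satisfied iff every coordinate satisfies the original projection constraint. Completeness stays $1$ by applying a satisfying labeling coordinatewise. For soundness, I invoke Raz's parallel repetition theorem for projection games: the new maximum satisfied fraction is at most $(1-\delta')^{c\, t /\log |\Sigma|}$ for an absolute constant $c>0$. Given $\epsilon > 0$, choose $t=t(\epsilon)$ large enough that this bound falls below $\epsilon$, and set $k_\epsilon \coloneqq |\Sigma|^{t(\epsilon)}$. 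Since $|\Sigma|$ and $t(\epsilon)$ depend only on $\epsilon$, so does $k_\epsilon$; the reduction is polynomial-time because $t$ is a constant and the size of $(G^{\otimes t}, \Sigma^t, \Pi^{\otimes t})$ is polynomial in the size of the base instance. Regularity of $U$ is preserved by the tensor construction, and having a common alphabet on both sides is obtained by embedding into the larger of the two alphabets.

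The main obstacle is Raz's parallel repetition theorem itself, whose information-theoretic proof is highly nontrivial and which must be invoked as a black box; everything else is an essentially syntactic encoding. Two minor points I would verify carefully are that perfect completeness is genuinely preserved under the tensor construction (it is, since projection constraints compose), and that the regularity requirement on $U$ is maintained throughout (clauses can be duplicated to equalize degrees in the base reduction, and the tensor preserves this).
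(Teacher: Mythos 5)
The paper does not actually prove this statement: it is imported verbatim from \citet{raz1998parallel} and \citet{arora1998proof} and used as a black box in the reduction of Theorem~\ref{thm:hard_label}, so there is no in-paper proof to compare against. Your reconstruction is the standard derivation from the literature (PCP theorem $\to$ constant-gap clause--variable projection game with perfect completeness $\to$ Raz's parallel repetition to drive soundness below $\epsilon$ at the cost of blowing the alphabet up to $k_\epsilon = |\Sigma|^{t(\epsilon)}$), and the overall architecture, the bookkeeping of how $k_\epsilon$ depends only on $\epsilon$, and the preservation of perfect completeness and $U$-regularity under repetition are all handled correctly. One detail to tighten: under the paper's definition the edge constraint is exactly $\pi(v) = \Pi_e(\pi(u))$ for a \emph{total} function $\Pi_e : \Sigma \to \Sigma$, so you cannot literally ``also require $u$'s assignment to satisfy the clause'' as an extra condition attached to the edge. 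The standard fix is to restrict the $U$-side labels to the seven satisfying assignments of each clause (e.g.\ take $\Sigma$ of size seven with a per-clause bijection onto satisfying assignments); if non-satisfying assignments were kept in the alphabet with $\Pi_e$ extended arbitrarily, an unsatisfiable formula could still admit a labeling satisfying every edge and soundness would collapse. With that adjustment, and with Raz's parallel repetition theorem taken as given (as you explicitly acknowledge), the argument is sound.
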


Next, we show our main result.~\footnote{
In order to prove Theorem~\ref{thm:hard_label}, we need that the difference between the overall principal's expected utility in the completeness part and that in the soundness part is at least $2^{-O(\rho)}$, otherwise a contract providing a $\big(\rho, 2^{-O(\rho)}\big)$-bi-approximation cannot distinguish between the two cases.
Thus, we cannot use the construction in Theorem~\ref{thm:hard_independent}, since it does \emph{not} enjoy this property.
Indeed, we would like that the principal's expected utility in the soundness case decreases at a rate of $2^{-O(\rho)}$ as $\rho$ increases, while in Theorem~\ref{thm:hard_independent} the principal's expected utility decreases with the number of agent's types, \emph{i.e.}, its maximum value is $2^{-\ell}$.
Moreover, in Theorem \ref{thm:hard_independent} we reduce from \textsf{GAP-INDEPENDENT-SET}, which has \emph{not} perfect completeness.
Thus, the principal can extract a satisfactory utility from at most a fraction $\ell^{-\epsilon}$ of the agent's types, which implies that the expected utility decreases with the number of agent's types.
In order to deal with these problems, we base our reduction on $\textsf{GAP-LABEL-COVER}_{c,s}$.
Using this problem, we have perfect completeness, though at the expense of the \textsf{NP}-hardness of approximating only to within any multiplicative constant factor.
This is sufficient for proving Theorem~\ref{thm:hard_label}, since it requires the \textsf{NP}-hardness of approximating up to within a multiplicative factor that is of the order of $\Theta(\rho)$.}

\begin{theorem} \label{thm:hard_label}
	Given a Bayesian principal-agent setting, it is \textnormal{\textsf{NP}}-hard to design a contract providing a $\big(\rho,2^{-\omega(\rho)} \big)$-bi-approximation of an optimal one. Equivalently, for any $\rho \ge 1$, it is \textnormal{\textsf{NP}}-hard to design a contract providing a $\big(  \rho ,2^{- d \rho + e} \big)$-bi-approximation for two constants $d \in \mathbb{R}^+$, $e \in \mathbb{R}$.
\end{theorem}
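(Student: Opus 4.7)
The plan is to reduce from $\textsf{GAP-LABEL-COVER}_{1, \epsilon}$, which by Theorem~\ref{thm:hard_gap_label} is \textsf{NP}-hard for every constant $\epsilon > 0$ (with alphabet size $k_\epsilon$ depending on $\epsilon$ but not on the input). For each fixed $\rho \geq 1$, I would choose $\epsilon$ as a sufficiently small constant (depending on $\rho$) so that the completeness/soundness gap of the \textsf{LABEL-COVER} instance translates into a principal-utility gap exceeding the additive tolerance $2^{-d\rho+e}$ allowed by the bi-approximation. As the footnote on the theorem explains, using \textsf{LABEL-COVER} rather than \textsf{GAP-INDEPENDENT-SET} is essential, since we need perfect completeness so that the principal's utility in the YES case is a constant (rather than exponentially small in the number of types).

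Given a \textsf{LABEL-COVER} instance $(G, \Sigma, \Pi)$ with $G = (U, V, E)$ of common left-degree $D$, I would build a Bayesian principal-agent instance whose agent types $\Theta = \{\theta_u\}_{u \in U}$ are uniform, whose outcomes are $\Omega = \{\omega_{v,\tau}\}_{v \in V, \tau \in \Sigma} \cup \{\bar\omega\}$ with rewards $r_{\omega_{v,\tau}} = 1$ and $r_{\bar\omega} = 0$, and whose actions are $A = \{a_\sigma\}_{\sigma \in \Sigma} \cup \{\bar a\}$. The action $\bar a$ deterministically yields $\bar\omega$ at cost zero (handling IR), while $a_\sigma$, played by an agent of type $\theta_u$, picks a neighbor $v \in N(u)$ uniformly at random and yields outcome $\omega_{v, \Pi_{(u,v)}(\sigma)}$; every $a_\sigma$ has the same cost $1 - \gamma$ for a small constant $\gamma > 0$ fixed later. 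Intuitively, the agent of type $\theta_u$ is asked to ``announce'' a label $\sigma$ for $u$, and the resulting outcome reveals the induced $V$-side label along a random incident edge.

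On the completeness side, if a labeling $\pi$ satisfies every edge constraint, I would exhibit the contract paying $p_{\omega_{v, \pi(v)}} = 1 - \gamma$ for every $v \in V$ and zero elsewhere. For each type $\theta_u$, action $a_{\pi(u)}$ deterministically reaches the outcomes $\omega_{v, \pi(v)}$ for $v \in N(u)$, so its expected payment is exactly $1 - \gamma$ matching its cost; by tie-breaking in the principal's favor, the implemented action has reward $1$ and payment at most $1 - \gamma$, yielding principal utility at least $\gamma$ per type and hence $OPT \geq \gamma$.

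On the soundness side, the per-type utility is always at most $\gamma$, because whenever the implemented action is some $a_{\sigma_u}$, IR forces $\frac{1}{D} \sum_{v \in N(u)} p_{\omega_{v, \Pi_{(u,v)}(\sigma_u)}} \geq 1 - \gamma$. The core step is to extract from the contract a labeling $\pi_V$ of $V$---for instance by picking $\pi_V(v)$ at random with probability proportional to $p_{\omega_{v,\tau}}$ or by a suitable thresholded choice---and to show, together with $\pi_U(u) \coloneqq \sigma_u$, that a type $\theta_u$ can contribute near $\gamma$ to the principal only if a constant fraction of the edges incident to $u$ is satisfied in expectation by $(\pi_U,\pi_V)$. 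Plugging in the soundness guarantee that every labeling satisfies at most fraction $\epsilon$ of the edges would then bound the overall principal utility by $O(\gamma \epsilon)$ plus lower-order terms, and choosing $\epsilon$ small enough in terms of $\rho$ would force $\gamma/\rho - 2^{-d\rho+e}$ to strictly exceed this bound for appropriate constants $d, e$, completing the reduction. The main obstacle is that payments $p_\omega$ are unconstrained nonnegative reals and can concentrate on rare outcomes or spread across many labels per $v$; I expect to need either a careful probabilistic rounding together with a charging argument for types where $\bar a$ is played or payments exceed $1$, or a refinement of the construction (auxiliary outcomes capping the effective payments, per-edge gadgets, or small $i$-dependent perturbations to costs in the spirit of Theorem~\ref{thm:hard_independent}) so that the extraction of the labeling from the contract is essentially lossless.
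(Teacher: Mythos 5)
Your high-level strategy (reduce from $\textsf{GAP-LABEL-COVER}_{1,\epsilon}$, exploit perfect completeness, scale parameters in terms of $\rho$) matches the paper's, and you correctly identify why \textsf{LABEL-COVER} is preferable to \textsf{GAP-INDEPENDENT-SET} here. However, the concrete construction you write down has a fatal soundness gap, and it is not merely the ``careful rounding'' issue you flag at the end---it is a counterexample.

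In your construction, all label actions $a_\sigma$ for a type $\theta_u$ share the same cost $1-\gamma$ and the same total reward~$1$, so the only constraints a contract faces are IR against $\bar a$ and the (symmetric) IC comparisons among the $a_\sigma$'s themselves. Nothing prevents the principal from simply setting $p_{\omega_{v,\tau}} = 1-\gamma$ for \emph{every} $v \in V$ and $\tau \in \Sigma$. Then every $a_\sigma$ has expected payment exactly $1-\gamma$, the agent is indifferent among all of them (and indifferent with $\bar a$), and by tie-breaking in the principal's favor she plays some $a_\sigma$, yielding principal utility $\gamma$ per type. This achieves per-type utility $\gamma$ for \emph{every} \textsf{LABEL-COVER} instance, including NO instances, so the soundness bound you hope for cannot hold. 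No labeling-extraction or charging argument can repair this: the contract genuinely attains the YES-case value.

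The missing ingredient is exactly the ``$i$-dependent perturbation'' actions you mention as a possible refinement; in the paper these are not optional. The paper's construction (where types are \emph{edges} $\theta_e$, not left nodes, so IC is localized to a single pair $(u,v)$) includes, for every non-satisfying label pair $(\sigma,\sigma')$ and every $i \in [\gamma]$, a gadget action $a_{i\sigma\sigma'}$ whose outcome distribution shrinks geometrically in $i$ and whose cost decreases by $2^{-\gamma-3}$ per step of $i$. These gadgets make the IC system genuinely binding: any contract that places large payments on $\omega_{u\sigma}$ and $\omega_{v\sigma'}$ for a \emph{non}-constraint pair gives the agent a profitable deviation to $a_{1\sigma\sigma'}$, which forces the principal's expected payment up and drives her utility on $a_{\sigma\sigma'}$ below the threshold $2^{-\gamma-2}$. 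This is precisely what rules out the uniform-payment exploit and makes the labeling extracted via $\pi(v) \in \argmax_\sigma p_{\omega_{v\sigma}}$ certify a large fraction of satisfied edges whenever the principal's utility is non-negligible. A secondary structural choice that supports this is that the paper's label outcomes carry reward $0$ (only an auxiliary outcome $\omega_1$ carries reward $1$), so the principal's reward is decoupled from where payments sit; your choice $r_{\omega_{v,\tau}}=1$ entangles the two and complicates the accounting even after adding gadgets. In short, the plan is sound in spirit, but the gadget actions must be built into the instance from the start---their absence is why your soundness argument cannot be completed as stated.
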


\begin{proof}
	Letting $\gamma \coloneqq \lceil {10\rho} \rceil$, we prove the result by means of a reduction from \textsf{GAP-LABEL-COVER}$_{1,\frac{1}{2\gamma}}$.
	In particular, our construction is such that, if the \textsf{LABEL-COVER} instance admits a labeling that satisfies all the edge constraints (recall that $c=1$), then the corresponding Bayesian principal-agent setting admits a contract providing the principal with an overall expected utility at least of $(\gamma+2)2^{\gamma-4}$.
	Instead, if at most a $\frac{1}{2\gamma}$ fraction of the edge constraints are satisfied by any labeling, then the principal expected utility is at most $2^{-\gamma-1}$ in any contract.
	By Theorem~\ref{thm:hard_gap_label}, this implies that designing a contract giving a $\big( \rho,2^{-8 \rho-4} \big)$-bi-approximation (for any $\rho \ge 1$) is \textsf{NP}-hard.
	%
	Indeed, the following relation shows that a $\big( \rho,2^{-8 \rho-4} \big)$-bi-approximation algorithm can determine whether the \textsf{LABEL-COVER} instance admits a labeling that satisfies all the edge constraints or at most a $\frac{1}{2\gamma}$ fraction of the edge constraints are satisfied by any labeling:
	\[
		\frac{1}{\rho} \left( \gamma+2 \right) 2^{\gamma-4}- 2^{-8\rho-4} \ge \frac{1}{\rho} \left( \gamma+2 \right) 2^{\gamma-4} -2^{-\gamma-3}> 10 \cdot \ 2^{\gamma-4}- 2^{-\gamma-3}\ge 2^{-\gamma-1}.
	\]
	Next, we provide the formal definition of our reduction and prove its crucial properties.
	
	\paragraph{Construction}
	Given an instance of \textsf{LABEL-COVER} $(G,\Sigma,\Pi)$ with a bipartite graph $G = (U,V,E)$, we build a Bayesian principal-agent setting $(\Theta,A,\Omega)$ as follows.
	For every node $v \in U \cup V$ of $G$ and label $\sigma \in \Sigma$, there is an outcome $\omega_{v  \sigma} \in \Omega$ with reward $r_{\omega_{v  \sigma} } = 0$ to the principal.
	Moreover, there are two additional outcomes $\omega_0, \omega_1 \in \Omega$ such that $r_{\omega_0} = 0$ and $r_{\omega_1} = 1$.
	The agent can be of $\ell = |E|$ different types, each associated with an edge of $G$; formally, $\Theta = \{ \theta_e \}_{e \in E}$.
	All the types have the same probability of occurring, being $\mu \in \Delta_\Theta$ such that $\mu_{\theta_e} = \frac{1}{\ell}$ for $e \in E$.
	For the ease of presentation and w.l.o.g., we let each agent's type $\theta_e \in \Theta$ having a different action set $A_{\theta_e}$, so that, with an abuse of notation, $A = \{ A_{\theta_e} \}_{\theta_e \in \Theta}$.
	Notice that, in order to recover a principal-agent setting in which each agent's type has the same set of actions, it is sufficient to add some dummy actions having zero cost for the agent and deterministically leading to outcome $\omega_0$ (with zero reward).
	Each agent's type $\theta_e \in \Theta$ with $e = (u,v)$ has an action $a_{\sigma  \sigma'} \in A_{\theta_e}$ for every pair of labels such that $\sigma \in \Sigma$ and $\sigma' = \Pi_e(\sigma)$.
	The action induces a probability distribution over outcomes $F_{\theta_e, a_{\sigma  \sigma'} }$ such that:
	\begin{itemize}
		\item Outcome $\omega_1$ is reached half of the times, being $F_{\theta_e, a_{\sigma  \sigma'}, \omega_1 } = \frac{1}{2} $;
		\item In the other half of the cases, outcomes $\omega_{u  \sigma}$ and $\omega_{v  \sigma'}$ are reached with equal probability, being $F_{\theta_e, a_{\sigma  \sigma'}, \omega_{u  \sigma} } = F_{\theta_e, a_{\sigma  \sigma'}, \omega_{v  \sigma'} } = \frac{1}{4} $.
	\end{itemize}
	The cost of the action is $c_{\theta_e, a_{\sigma \sigma'}} = \frac{1}{2}-(\gamma+2) 2^{-\gamma-3}$, no matter the agent's type $\theta_e \in \Theta$.
	Moreover, each agent's type $\theta_e \in \Theta$ with $e = (u,v)$ has an action $a_{i \sigma  \sigma'} \in A_{\theta_e}$ for every index $i \in [\gamma]$ and pair of labels $\sigma, \sigma' \in \Sigma$ such that $ \sigma' \neq  \Pi_e(\sigma) $.
	The action probability distribution $F_{\theta_e, a_{i \sigma  \sigma'} }$ is such that:
	\begin{itemize}
		\item Outcomes  $\omega_{u  \sigma}$ and $\omega_{v  \sigma'} $ are reached with the same (small) probability decreasing exponentially in the value of $i$, being $F_{\theta_e, a_{i \sigma  \sigma'}, \omega_{u  \sigma} } = F_{\theta_e, a_{i \sigma  \sigma'}, \omega_{v  \sigma'} } = 2^{-i-2}$;
		\item Outcome $\omega_1$ is reached with a probability twice as large as that of the previous ones, as $F_{\theta_e, a_{i \sigma \sigma'} , \omega_1 } = 2^{-i-1}$; 
		\item In all the other cases outcome $\omega_0$ is reached, since $F_{\theta_e, a_{i \sigma \sigma'} , \omega_0 } = 1 - 2^{-i}$.
	\end{itemize}
	Finally, the cost of the action is $c_{\theta_e, a_{i \sigma \sigma'} } = 2^{-i-1}-(\gamma-i+2) 2^{-\gamma-3}$
	
	\paragraph{Overview}
	The Bayesian principal-agent instances of the reduction have a structure similar to those in the proof of Theorem~\ref{thm:hard_independent}.
	Here, the contribution to the overall principal's expected utility due to an agent's type playing an action $a_{i \sigma \sigma'} \in A$ is small.
	Thus, the principal's objective is to incentivize as many agent's types as possible to play an action $a_{\sigma\sigma'}$.
	We recall that, for each agent's type $\theta_{e} \in \Theta$ with $e = (u,v)$, there exists an action $a_{\sigma \sigma'}$ only if the labels $\sigma$ and $\sigma'$ satisfy the constraint for edge $e$, namely $\sigma' = \Pi_e(\sigma)$.
	Moreover, in order for the principal to incentivize an agent's type to play $a_{\sigma \sigma'}$ and extract a satisfactory utility from that, the principal must commit to a contract that sets some payments on outcomes $\omega_{u \sigma}$ and $\omega_{v \sigma'}$.
	More precisely, an agent of type $\theta_{e} \in \Theta$ with $e = (u,v)$ is incentivized to play $a_{\sigma \sigma'}$ if the payments on outcomes $\omega_{u \sigma}$ and $\omega_{v \sigma'}$ are equal and sufficiently large.
	At the same time, there must \emph{not} be two labels $\sigma_u, \sigma_v \in \Sigma$ with $\sigma_u \neq \sigma$ and $\sigma_v \neq \sigma'$ such that a large payment is assigned to either $\omega_{u \sigma''}$ or $\omega_{v \sigma''}$, otherwise an agent of type $\theta_e$ would be incentivized to play action $a_{1 \sigma_u \sigma_v}$ rather than $a_{\sigma \sigma'}$.
	Then, for every vertex $v \in U \cup V$ of the graph $G$, there exists a single label $\sigma \in \Sigma$ such that there is some payment on $\omega_{v \sigma}$ and these labels define a labeling that satisfies all the constraints of edges corresponding to agent's types that play action $a_{\sigma \sigma'}$ while resulting in a satisfactory principal's expected utility.
	%

	\paragraph{Completeness}
	Suppose the instance of \textsf{LABEL-COVER} $(G,\Sigma,\Pi)$ admits a labeling $\pi : U \cup V \to \Sigma$ that satisfies all the edge constraints in $\Pi$.
	Let us define a contract such that $p_{\omega_{v \pi(v)}} = 1-(\gamma+2) 2^{-\gamma-3}$ for every node $v \in U \cup V$, while all the other payments are set to zero.
	First, we show that, given this contract, an agent of type $\theta_{e} \in \Theta$ with $e= (u,v)$ is incentivized to play action $a_{\pi(u) \pi(v)}$.
	Recall that, in our construction, an agent of type $\theta_{e}$ has action $a_{\pi(u) \pi(v)}$ available if and only if $\pi(v) = \Pi_e(\pi(u))$, which is always true since the labeling $\pi$ satisfies all the edge constraints by assumption.
	Given the definition of the contract, it holds that $p_{\omega_{u \pi(v)}} = p_{\omega_{v \pi(v)}} = 1-(\gamma+2) 2^{-\gamma-3}$, while $p_{\omega_{u \sigma}} = 0$ for every $\sigma \in \Sigma \setminus \{ \pi(u) \}$ and $p_{\omega_{v \sigma}} = 0$ for every $\sigma \in \Sigma \setminus \{ \pi(v) \}$.
	This implies that the expected utility of an agent of type $\theta_{e}$ by playing action $a_{\pi(u) \pi(v)}$ is:
	\begin{align*}
		P_{\theta_{e}, a_{\pi(u) \pi(v)}} - c_{\theta_e, a_{\pi(u) \pi(v)}} & =\frac{1}{4} \left( p_{\omega_{u\sigma}} +p_{\omega_{v\sigma'}} \right) - \left[ \frac{1}{2}- (\gamma+2) 2^{-\gamma-3} \right]= \\
		& =-(\gamma+2) 2^{-\gamma-4} +(\gamma+2)2^{-\gamma-3}= \\
		& =(\gamma+2) 2^{-\gamma-4}.
	\end{align*}
	Moreover, for any pair of labels $\sigma, \sigma' \in \Sigma$ such that $ \sigma' \neq  \Pi_e(\sigma) $, each action $a_{i \sigma \sigma'}$ for $i \in [\gamma]$ provides an expected utility of:
	\begin{align*}
		P_{\theta_{e}, a_{i \sigma \sigma'}} - c_{\theta_e, a_{i \sigma\sigma'}} & =2^{-i-2} \left[ 1-(\gamma+2) 2^{-\gamma-3} \right] - \left[ 2^{-i-1}-(\gamma-i+2)2^{-\gamma-3} \right]= \\
		& = 2^{-\gamma-3} \left[ -2^{-i-2+\gamma+3}-(\gamma+2) 2^{-i-2}+ \gamma-i+2 \right] ,
	\end{align*}
	which holds since it cannot be the case that both $p_{\omega_{u \sigma}}$ and $p_{\omega_{v \sigma'}}$ are different from zero, otherwise it would be $\pi(v) \in \Sigma \setminus \{ \pi(u) \}$, contradicting the fact that the labeling $\pi$ satisfies all the edge constraints.
	We distinguish two cases.
	In the first one, it holds $i\ge \frac{\gamma}{2}+1$.
	Then,
	\[
		P_{\theta_{e}, a_{i \sigma \sigma'}} - c_{\theta_e, a_{i \sigma\sigma'}}  \le 2^{-\gamma-3}  (\gamma-i+2) \le (\gamma+2) 2^{-\gamma-4} = P_{\theta_{e}, a_{\pi(u) \pi(v)}} - c_{\theta_e, a_{\pi(u) \pi(v)}}.
	\]
	In the second case, it holds $i\le \frac{\gamma}{2}+1$, which implies that:
	\[
		P_{\theta_{e}, a_{i \sigma \sigma'}} - c_{\theta_e, a_{i \sigma\sigma'}}  \leq 2^{-\gamma-3} (\gamma+2-2^{\frac{\gamma}{2}})\le \frac{\gamma}{2}2^{-\gamma-3} \leq P_{\theta_{e}, a_{\pi(u) \pi(v)}} - c_{\theta_e, a_{\pi(u) \pi(v)}},
	\]
	where the second-last inequality holds since $\frac{\gamma}{2} + 2 \leq 2^{\frac{\gamma}{2}}$ for $\gamma \geq 4$.
	Finally, it is easy to see that all the actions $a_{\sigma \sigma'}$ that are different from $a_{\pi(u) \pi(v)}$ provide an agent of type $\theta_{e}$ with an expected utility smaller than that achieved by playing $a_{\pi(u) \pi(v)}$.
	This shows that the contract incentivizes each agent's type $\theta_{e} \in \Theta$ with $e = (u,v)$ to play action $a_{\pi(u) \pi(v)}$.
	In conclusion, the overall expected utility of the principal is:
	\begin{align*}
		\sum_{\theta_{e} \in \Theta} \mu_{\theta_{e}} \left( R_{\theta_{e}, a^*(\theta_{e})} - P_{\theta_{e}, a^*(\theta_{e})} \right) &= \frac{1}{\ell} \sum_{e = (u,v) \in E} R_{\theta_{e}, a_{\pi(u) \pi(v)} } - P_{\theta_{e}, a_{\pi(u) \pi(v)} } = \\
		& = \frac{1}{2} - \frac{1}{2} \left[  1-(\gamma+2) 2^{-\gamma-3} \right] =\\
		& = (\gamma+2) 2^{-\gamma-4}.
	\end{align*}
	
	\paragraph{Soundness}
	We show that, if the \textsf{LABEL-COVER} instance is such that every labeling $\pi : U \cup V \to \Sigma$ satisfies at most a fraction $\frac{1}{2\gamma}$ of the edge constraints in $\Pi$, then, in the corresponding principal-agent setting, any contract provides the principal with an expected utility at most of $2^{-\gamma-1}$.
	As a first step, we show that all the actions $a_{i \sigma \sigma'}$ provide the principal with an expected utility at most of $2^{-\gamma-2}$.
	Assume that the agent has type $\theta_{e} \in \Theta$ with $e = (u,v)$ and that the contract deployed by the principal implements an action $a_{\gamma \sigma \sigma'}$ for an agent of type $\theta_{e}$, for some $\sigma, \sigma' \in \Sigma$ such that $\sigma' \neq \Pi_e(\sigma)$.
	Then, the principal's expected reward is $R_{\theta_{e}, a_{\gamma \sigma \sigma'}} = 2^{-\gamma-1}$, while the agent's cost is $c_{\theta_{e}, a_{\gamma \sigma \sigma'}} = 2^{-\gamma-1}-2^{-\gamma-2}$, implying that the principal's expected utility is at most $2^{-\gamma-2}$.
	Now, assume that the contract implements an action $a_{i \sigma \sigma'}$ with $i \in [\gamma] : i < \gamma$ for an agent of type $\theta_{e}$, for some $\sigma, \sigma' \in \Sigma$ such that $\sigma' \neq \Pi_e(\sigma)$.
	Then, since the action $a_{i \sigma \sigma'}$ must be IC, it must provide the agent with an expected utility greater than or equal to that provided by action $a_{i+1 \, \sigma \sigma'}$.
	Thus, it must be the case that $P_{\theta_e, a_{i \sigma \sigma'}} - c_{\theta_e, a_{i \sigma \sigma'}} \geq P_{\theta_e, a_{i+1 \, \sigma \sigma'}} - c_{\theta_e, a_{i+1 \,  \sigma \sigma'}} $, which implies that:
	\begin{align*}
		&2^{-i-1} p_{\omega_1} + 2^{-i-2} p_{\omega_{u\sigma}} +2^{-i-2} p_{\omega_{v \sigma'}} + \left( 1-2^{-i} \right) p_{ \omega_0}    - \left[ 2^{-i-1}-(\gamma-i+2) 2^{-\gamma-3} \right]\ge\\
		& \geq 2^{-i-2} p_{\omega_1}+ 2^{-i-3} p_{\omega_{u \sigma}} +2^{-i-3} p_{\omega_{v \sigma'}} +  \left( 1-2^{-i-1} \right) p_{\omega_0}  - \left[ 2^{-i-2}-(\gamma-i+1) 2^{-\gamma-3} \right].
	\end{align*}
	Thus, by re-arranging the terms and using the fact that $p_{\omega_0} \geq 0$, we get $2 p_{\omega_1}+ p_{\omega_{u \sigma}} +p_{\omega_{v \sigma'}} \ge 2- 2^{-\gamma+i}$, which implies that the principal's expected utility is:
	\[
		R_{\theta_{e}, a_{i \sigma \sigma'} } - P_{\theta_e, a_{i \sigma \sigma'} } = 2^{-i-1} - 2^{-i-1} p_{\omega_1} - 2^{-i-2} p_{\omega_{u\sigma}} -2^{-i-2} p_{\omega_{v \sigma'}}- \left( 1-2^{-i} \right) p_{ \omega_0}    \leq 2^{-\gamma-2}.
	\]
	This proves that any agent's action $a_{i \sigma \sigma'}$ provides the principal with an expected utility at most of $2^{-\gamma-2}$.
	Next, we switch the attention to actions $a_{\sigma \sigma'}$.
	Given a contract, let $\pi : U \cup V \to \Sigma$ be a labeling for the \textsf{LABEL-COVER} instance such that $\pi(v) \in \argmax_{\sigma \in \Sigma} p_{\omega_{v \sigma}}$ for every $v \in U \cup V$ (with ties broken arbitrarily).
	We show that, for an agent of type $\theta_e \in \Theta$, the contract implements an action providing the principal with an expected utility greater than $2^{-\gamma-2}$ only if the labeling $\pi$ satisfies the constraint $\Pi_e$ associated with edge $e$.
	By contradiction, suppose that $e = (u,v)$ and the constraint $\Pi_e$ is \emph{not} satisfied by $\pi$ since $\pi(v) \neq \Pi_e(\pi(u))$.
	Then, there is an agent's action $a_{1 \sigma_u \sigma_v} \in A_{\theta_e}$ with $\sigma_u \in \argmax_{\sigma \in \Sigma} p_{\omega_{\omega_{u \sigma} }  }$ and $\sigma_v \in \argmax_{\sigma \in \Sigma} p_{\omega_{\omega_{v \sigma} }  }$ such that the agent's expected utility is:
	\[
		P_{\theta_e, a_{1 \sigma_u \sigma_v}} - c_{\theta_e, a_{1 \sigma_u \sigma_v}} = \frac{1}{4} p_{\omega_1} + \frac{1}{8} \left(  p_{\omega_{u \sigma_u} }  +  p_{\omega_{v \sigma_v} }  \right) + \frac{1}{2} p_{\omega_0} - \left[ \frac{1}{4} - (\gamma+1) 2^{-\gamma - 3} \right].
	\]
	Moreover, all the actions $a_{\sigma \sigma'} \in A_{\theta_e}$ for $\sigma \in \Sigma$ and $\sigma' = \Pi_e(\sigma)$ provide the agent with a utility:
	\[
		P_{\theta_e, a_{ \sigma \sigma'}} - c_{\theta_e, a_{ \sigma \sigma'}} =\frac{1}{2}p_{\omega_1}+\frac{1}{4} \left( p_{\omega_{u \sigma}}+p_{\omega_{v \sigma'}} \right)- \left[ \frac{1}{2}-(\gamma+2) 2^{-\gamma-3}\right],
	\]
	where, by definition, it holds $p_{\omega_{u \sigma}}\le p_{\omega_{u\sigma_u}}$ and $p_{\omega_{v \sigma'}}\le p_{\omega_{v \sigma_v}}$.
	Thus, since an action $a_{\sigma \sigma'}$ is IC only if it holds that $P_{\theta_e, a_{ \sigma \sigma'}} - c_{\theta_e, a_{ \sigma \sigma'}} \geq P_{\theta_e, a_{1 \sigma_u \sigma_v}} - c_{\theta_e, a_{1 \sigma_u \sigma_v}} $, we can conclude that $2p_{\omega_1} +p_{\omega_{u \sigma}}+p_{\omega_{v \sigma'}} \ge 2- 2^{-\gamma}$.
	As a result, the expected utility of the principal is:
	\[
		R_{\theta_e, a_{ \sigma \sigma'}  } - P_{\theta_e , a_{ \sigma \sigma'}  } = \frac{1}{2} - \frac{1}{2}p_{\omega_1} - \frac{1}{4} \left( p_{\omega_{u \sigma}}+p_{\omega_{v \sigma'}} \right) \leq 2^{-\gamma-2},
	\]
	which is a contradiction.
	Finally, the maximum expected utility the principal can achieve for an agent of any type $\theta_e \in \Theta$ is $\max_{a \in A_{\theta_e}} \left\{ R_{\theta_e, a} - c_{\theta_e, a} \right\} = (\gamma+2)2^{-\gamma-3}$.
	By assumption, any labeling satisfies at most a fraction $\frac{1}{2\gamma} |E|$ of the edge constraints, thus, given any contract, at most a fraction $\frac{1}{2\gamma} \ell$ of agent's types play an action providing the principal with an expected utility greater than $2^{-\gamma-2}$ (and at most $(\gamma+2)2^{-\gamma-3}$).
	Then, the overall principal's expected utility in any contract is:
	\[
		\sum_{\theta_{e} \in \Theta} \mu_{\theta_{e}} \left( R_{\theta_{e}, a^*(\theta_{e})} - P_{\theta_{e}, a^*(\theta_{e})} \right) < \frac{1}{\ell} \cdot \frac{\ell}{2 \gamma} (\gamma+2)2^{-\gamma-3}+ \frac{1}{\ell} \left( \ell - \frac{\ell}{2 \gamma} \right)2^{-\gamma-2} \leq 2^{-\gamma+1},
	\]
	which concludes the proof.
\end{proof}

\section{Tractable Cases}\label{sec:tractable}

In this section, we investigate under which circumstances the problem of finding an optimal contract in Bayesian principal-agent settings is computationally tractable.
In particular, we show that the problem is solvable in polynomial time when either the number of agent's types $\ell$ or the number of outcomes $m$ is small.
Formally, we exhibit two algorithms that run in polynomial time when $\ell$ and $m$, respectively, are kept constant.

Let us remark that, as a byproduct of Theorem~\ref{thm:hard_label}, we also get that, even when the agent has a constant number of actions, it is \textsf{NP}-hard to approximate the contract-design problem up to within any given constant factor.
Specifically, the theorem implies that, for any $\rho \ge 1$, there exists a constant $k_\rho \in \mathbb{N}$ that depends on $\rho$ such that the problem is \textsf{NP}-hard to approximate up to within a multiplicative loss $\rho$ even when restricted to Bayesian principal-agent settings with $|A| \leq k_\rho$. 
However, the constant number of actions $k_\rho$ required for the hardness increases as the multiplicative approximation loss $\rho$ increases.
We leave as an open problem determining whether there are or not algorithms providing reasonable approximation guarantees with a small number of agent's actions.

\subsection{Constant Number of Types}

The crucial observation grounding our result is that the hardness of the problem of designing an optimal contract in Bayesian principal-agent settings stems from the difficulty of finding, among the exponentially-many possibilities, the tuple of agent's actions (one per type) that need to be incentivized.
Instead, given a tuple $\left( a_\theta \right)_{\theta \in \Theta}$ defining an agent's action $a_\theta \in A$ for each type $\theta \in \Theta$, a contract that implements $a_\theta$ for every type $\theta \in \Theta$ and maximizes the overall principal's expected utility can be obtained by the following linear program:
\begin{subequations}\label{lp:min_pay}
	\begin{align}
		\min_{p \in \mathbb{R}^m} & \quad  \sum_{\theta \in \Theta} \mu_\theta \sum_{\omega \in \Omega} p_\omega F_{\theta,a_\theta,\omega}\\
		\textnormal{s.t. } & \sum_{\omega \in \Omega} p_\omega F_{\theta,a_\theta,\omega} -c_{\theta,a_\theta} \ge \sum_{\omega \in \Omega} p_\omega F_{\theta,a,\omega} -c_{\theta,a} & \forall \theta \in \Theta, \forall a \in A \label{eq:incentive}\\
		&p_\omega \ge 0 & \forall \omega \in \Omega,
	\end{align} 
\end{subequations}
where, for the ease of notation, we identify a contract with a vector $p \in \mathbb{R}^m$ whose components are the payments $p_\omega$ for $\omega \in \Omega$ defining the contract.
Notice that, given that the agent's actions are fixed, the objective function to be minimized is the expected payment from the principal to the agent (as the principal's reward is fixed).
Constraints~\eqref{eq:incentive} ensure that each action $a_\theta$ is IC for an agent of type $\theta \in \Theta$ (recall that IR is ensured by Assumption~\ref{ass:ir}).

The following proposition shows that an optimal contract can be found by enumerating all the possible $n^\ell$ tuples of actions $\left( a_\theta \right)_{\theta \in \Theta}$, selecting the one that gives the highest optimal value for Problem~\eqref{lp:min_pay} (and the corresponding contract).
As an immediate consequence, we get that, when the number of agent's types $\ell$ is kept constant, then the overall running time of the resulting algorithm is polynomial in the size of the problem instance.~\footnote{The proofs of Theorem~\ref{prop:types} and Theorem~\ref{thm:outcomes} are deferred to the Appendix.}

\begin{restatable}{theorem}{propositionTypes}\label{prop:types}
	There exists an algorithm running in time polynomial in $n^\ell$ and $m$ that finds an optimal contract in any Bayesian principal-agent instance given as input.
\end{restatable}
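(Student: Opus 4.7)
The plan is to reduce the contract design problem to the enumeration of action profiles, one per agent's type, and solve the LP given in the excerpt for each such profile. Concretely, I would iterate over all $n^\ell$ tuples $(a_\theta)_{\theta \in \Theta} \in A^\Theta$, and for each tuple solve Problem~\eqref{lp:min_pay}, which, as noted, is the minimum expected payment subject to $(a_\theta)_\theta$ being weakly IC for each type (IR is automatic by Assumption~\ref{ass:ir}). If the LP is infeasible, skip the tuple; otherwise, record the computed contract $\hat{p}$ and the ``nominal'' principal utility $\sum_{\theta \in \Theta} \mu_\theta \left( R_{\theta, a_\theta} - \sum_{\omega \in \Omega} \hat{p}_\omega F_{\theta, a_\theta, \omega}\right)$. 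The algorithm outputs the contract attaining the maximum over all tuples.

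For correctness, let $p^\ast$ denote an optimal contract and, for each $\theta \in \Theta$, let $a^\ast(\theta)$ be the action implemented for type $\theta$ under $p^\ast$. When the algorithm considers the tuple $(a^\ast(\theta))_\theta$, the contract $p^\ast$ is feasible in~\eqref{lp:min_pay}, so the LP's optimal value is at most $\sum_\theta \mu_\theta \sum_\omega p^\ast_\omega F_{\theta, a^\ast(\theta), \omega}$. Equivalently, the LP-solution's nominal principal utility for this tuple is at least $OPT$. Hence the maximum nominal utility over all enumerated tuples is at least $OPT$, and this bound is tight because every contract implements some tuple (the nominal value of the LP-minimizer for any tuple is an upper bound on $OPT$, so the argmax equals $OPT$).

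The delicate point—and the one I would flag as the main obstacle—is the interaction with the tie-breaking rule: the LP only enforces weak IC, so under the LP-minimizing $\hat{p}$ several actions for type $\theta$ could be IC, and by the convention that the agent breaks ties in favor of the principal, the actually implemented tuple $(\tilde a_\theta)_\theta$ may differ from $(a_\theta)_\theta$. However, tie-breaking can only increase the principal's realized utility above the nominal one, so the realized utility of the output contract is at least $OPT$, hence the contract is optimal. I would make this explicit in the write-up.

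For the running time, the enumeration yields $n^\ell$ LPs, and each LP has $m$ variables and $n\ell$ inequality constraints with coefficients derived directly from the input, so it admits a solution in time polynomial in $m$ and $n\ell$ by any standard polynomial LP algorithm. Combining, the overall running time is polynomial in $n^\ell$ and $m$, establishing the claim; in particular, when $\ell$ is constant, the algorithm runs in polynomial time.
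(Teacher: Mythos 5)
Your proposal is correct and matches the paper's proof: enumerate all $n^\ell$ action tuples, solve Problem~\eqref{lp:min_pay} for each, select the best, and invoke the tie-breaking-in-favor-of-the-principal convention to argue that the realized utility of the returned contract dominates its nominal LP objective. (One stray remark---that the nominal value of the LP minimizer ``is an upper bound on $OPT$''---has the inequality reversed, but it is not load-bearing: the argument only needs that the nominal value is at least $OPT$ at the tuple implemented by an optimal contract, together with realized utility being at least nominal utility.)
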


\subsection{Constant Number of Outcomes}

The crucial insight underlying the polynomial-time algorithm is that, when the number of outcomes is kept constant, it is sufficient to search for an optimal contract in a polynomially-sized set of possible candidates.
For the ease of notation, we let $P \coloneqq \mathbb{R}_+^m$ be the set of vectors identifying all the possible contracts, where, given $p \in \mathbb{R}_+^m$, we denote with $p_\omega$ the vector component defining the payment associated to outcome $\omega \in \Omega$.
Moreover, for every agent's action $a \in A$ and agent's type $\theta \in \Theta$, we let $P(a,\theta) \subseteq P$ be the set identifying all the contracts that implement action $a$ for an agent of type $\theta$.
Formally, the set $P(a, \theta)$ is characterized by the following set of inequalities representing IC constraints:
\begin{equation}\label{eq:hyperplane}
	\sum_{\omega \in \Omega} p_\omega F_{\theta,a,\omega} -c_{\theta,a} \ge \sum_{\omega \in \Omega} p_\omega F_{\theta,a',\omega} -c_{\theta,a'} \quad \forall a' \in A : a' \neq a.
\end{equation}
Additionally, for every tuple of agent's actions $\textbf{a} = \left( a_\theta \right)_{\theta \in \Theta} \in \bigtimes_{\theta \in \Theta} A$, we let $P (\textbf{a}) \coloneqq \bigcap_{\theta \in \Theta} P(a_\theta, \theta)$ be the set identifying all the contracts that implement action $a_\theta$ for each agent's type $\theta \in \Theta$.
Finally, we let $P^\star \coloneqq \bigcup_{\textbf{a} \in \bigtimes_{\theta \in \Theta} A} \mathcal{V} (P(\textbf{a}))$, where $\mathcal{V}(P(\textbf{a}))$ denotes the set of vertices of polytope $P(\textbf{a})$.

The following theorem shows that, given any Bayesian principal-agent setting, there always exists an optimal contract belonging to the set $P^\star$ and that $P^\star$ has size bounded by a polynomial in $n^m$ and $\ell^m$.
Thus, whenever the number of outcomes $m$ is kept constant, an optimal contract can be computed in time polynomial in the size of the instance.

 \begin{restatable}{theorem}{theoremOutcomes}\label{thm:outcomes}
	There exists an algorithm running in time polynomial in $n^m$ and $\ell^m$ that finds an optimal contract in any Bayesian principal-agent instance given as input.
\end{restatable}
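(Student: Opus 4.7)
The plan is to prove that an optimal contract can always be found among a pool of candidates of size polynomial in $n^m$ and $\ell^m$, each candidate being the unique intersection of $m$ hyperplanes drawn from a small explicit list. First I would fix an arbitrary tuple $\mathbf{a}=(a_\theta)_{\theta\in\Theta}\in\bigtimes_{\theta\in\Theta}A$ of actions to be implemented and observe that, since the expected reward $\sum_{\theta}\mu_\theta R_{\theta,a_\theta}$ is then a constant, the principal's problem reduces to the linear program~\eqref{lp:min_pay} of minimizing the total expected payment on $P(\mathbf{a})$. The feasible region is pointed (it is contained in $\mathbb{R}_+^m$) and the linear objective is bounded below by $0$, so, whenever the LP is feasible, its minimum is attained at a vertex of $P(\mathbf{a})$. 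Taking the union over $\mathbf{a}$ then gives that some optimal contract lies in $P^\star$.

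Next I would bound $|P^\star|$. Every vertex of any $P(\mathbf{a})$ is determined by $m$ linearly independent tight constraints chosen from two pools of hyperplanes that do \emph{not} depend on $\mathbf{a}$: the IC hyperplanes, one per triple $(\theta,a,a')\in\Theta\times A\times A$ with $a\neq a'$, totalling at most $\ell\,n(n-1)$, and the $m$ non-negativity hyperplanes $p_\omega=0$. Hence
\[
|P^\star|\le \binom{\ell n(n-1)+m}{m}=O\bigl((\ell n^2+m)^m\bigr),
\]
which is polynomial in $n^m$ and $\ell^m$.

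The algorithm is then the natural one: enumerate every $m$-subset of the pool of candidate hyperplanes; solve the corresponding $m\times m$ linear system to obtain a tentative $p$; discard $p$ if the system is singular or if any $p_\omega<0$; otherwise, for every type $\theta\in\Theta$ compute $a^\ast(\theta)\in\argmax_{a\in A}(P_{\theta,a}-c_{\theta,a})$ with ties broken in favor of the principal, evaluate the induced overall expected utility $\sum_{\theta}\mu_\theta(R_{\theta,a^\ast(\theta)}-P_{\theta,a^\ast(\theta)})$, and keep the best contract seen. The per-candidate work is polynomial in $n,\ell,m$, so the overall running time is polynomial in $n^m$ and $\ell^m$, as required. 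The main subtlety, and the step I expect to need the most care, is reconciling the vertex argument with the tie-breaking rule: a vertex $p^\star$ implementing the optimal profile $\mathbf{a}^\star$ may leave several actions tied for some type, yet since the principal's tie-breaking picks the action maximising her own expected reward, the profile actually induced at $p^\star$ yields overall utility no smaller than that of $(p^\star,\mathbf{a}^\star)$, so evaluating candidates via the agent's best response with principal-favouring ties is consistent with the LP argument and preserves optimality.
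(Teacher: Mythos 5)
Your proof is correct and follows essentially the same route as the paper's: fix a tuple $\mathbf{a}$ of implemented actions, observe that the LP~\eqref{lp:min_pay} attains its optimum at a vertex of $P(\mathbf{a})$, union over all tuples to get a candidate set $P^\star$ of size at most $\binom{\ell n^2+m}{m}$ (since all the IC hyperplanes across tuples come from the same $\mathbf{a}$-independent pool), enumerate, and use the principal-favouring tie-breaking to reconcile vertex degeneracy with optimality. The only addition you make is to spell out explicitly why a vertex optimizer exists (pointedness from $P(\mathbf{a})\subseteq\mathbb{R}_+^m$ and a bounded-below linear objective), a step the paper takes for granted by informally calling $P(\mathbf{a})$ a ``polytope.''
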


\section{Discussion}\label{sec:conclusions}

Despite principal-agent problems are ubiquitous in real-world economic scenarios, computational works on these problems appeared only recently and they are limited to specific settings~\citep{babaioff2012combinatorial,dutting2019simple,dutting2020complexity}.
In this paper, we introduce and study a new Bayesian principal-agent model in which the principal is uncertain about the agent's type.
This makes a considerable step over classical (non-Bayesian) principal-agent settings, as there are many real-world problems in which it is unreasonable to assume that the principal has complete knowledge of the agent.
Moreover, our Bayesian model begets new computational challenges that make it worth studying on its own, since, differently from the non-Bayesian case, in our setting a principal-optimal contract cannot be computed efficiently.

Linear contracts are the \emph{de facto} standard usually employed in real-world principal-agent problems, given their relative implementation simplicity, due to them being based on a pure-commission principle.
As a result, the research on principal-agent problems (mainly in economics, but also in computer science~\citep{dutting2019simple}) strived to find mathematical justifications of why linear contracts are so popular in practice.
Recently-developed studies show that, in non-Bayesian principal-agent settings, linear contracts are approximately optimal except in some degenerate situations~\citep{dutting2019simple} and that they enjoy some robustness properties~\citep{carroll2015robustness,carroll2019robustness,dutting2019simple}.
Our results further justify the use of linear contracts, showing that, in more realistic settings as those captured by our Bayesian model, they are the best among all the contracts that can be designed with bounded computationally resources.

\section*{Acknowledgments}
This work has been partially supported by the Italian MIUR PRIN 2017 Project ALGADIMAR ``Algorithms, Games, and Digital Market''.

\bibliographystyle{named}
\bibliography{biblio}

\appendix
\section{Proofs Omitted from Section~\ref{sec:tractable}}

\propositionTypes*

\begin{proof}
	The algorithm works by solving Problem~\eqref{lp:min_pay} for every possible tuple $\left( a_\theta \right)_{\theta \in \Theta}$ in the set $\bigtimes_{\theta \in \Theta} A$.
	Then, it picks the tuple (and the corresponding contract obtained by solving Problem~\eqref{lp:min_pay} for it) that results in the highest optimal value for Problem~\eqref{lp:min_pay}.
	We prove the correctness of the algorithm by showing that the returned contract, identified by a vector $ p^\star \in \mathbb{R}^m$, must provide the principal with an expected utility at least as large as that of any other contract.
	Let us take an arbitrary contract identified by vector $p \in \mathbb{R}^m$, and let $\left( a_\theta \right)_{\theta \in \Theta}$ be a tuple such that, for every $\theta \in \Theta$, the contract implements action $a_\theta$ for an agent of type $\theta$.
	Then, by solving Problem~\eqref{lp:min_pay} for $\left( a_\theta \right)_{\theta \in \Theta}$, the algorithm finds a contract incentivizing the same tuple of agent actions and requiring the principal an expected payment smaller than or equal to that of $p$.
	Notice that, since the contract found by solving the LP in Problem~\eqref{lp:min_pay} may lie on the boundary of its feasible region, there could be other tuples of agent actions that are incentivized by the contract.
	However, by using the assumption that the agent always breaks ties in favor of the principal, we can conclude that the tuple of agent actions that is actually played must provide the principal with an expected reward greater than or equal to that obtained for $\left( a_\theta \right)_{\theta \in \Theta}$.
	Thus, we can conclude that $p^\star$ provides the principal with an expected revenue greater than or equal to that of $p$, while requiring a smaller or equal payment, showing the correctness of the algorithm.
	Finally, notice that the algorithm solves $n^\ell$ different LPs, one for each tuple in $\bigtimes_{\theta \in \Theta} A$.
	The LPs have $m$ variables and $\ell \cdot n$ constraints, and, thus, they can be solved in time polynomial in $n$, $m$, and $\ell$.
\end{proof}

\theoremOutcomes*

\begin{proof}
	The proof involves two steps.
	\paragraph{First Step} We show that, for any contract defined by a vector $p \in P$, there exists another contract identified by a vector $p^\star \in P^\star$ providing the principal with an expected utility greater than or equal to that obtained for $p$.
	Let $\textbf{a}=(a_\theta)_{\theta \in \Theta} \in \bigtimes_{\theta \in \Theta} A$ be a tuple of agent actions such that the contract $p$ implements action $a_\theta$ for every type $\theta \in \Theta$.
	Let us define $p^\star \in P$ as the optimal solution of the LP in Problem~\eqref{lp:min_pay} for the tuple $(a_\theta)_{\theta \in \Theta}$.
	Noticing that the objective of Problem~\eqref{lp:min_pay} is to minimize a linear function over the polytope $P(\textbf{a})$, we can assume w.l.o.g. that the vector $p^\star$ is a vertex of the polytope, \emph{i.e.}, that $p^\star \in \mathcal{V} (P(\textbf{a}))$.
	Notice that, since $p^\star$ lies on a vertex of the feasible region of the LP, then there might be other tuples of agent actions that are incentivized by the contract identified by $p^\star$.
	However, given the assumption that the agent breaks ties in favor of the principal, these would provide the principal with an expected reward greater than or equal to that obtained for $(a_\theta)_{\theta \in \Theta}$.
	Thus, we can conclude that $p^\star$ has expected reward greater than or equal to that of $p$, while requiring a smaller or equal payment, proving the first step.
	
	\paragraph{Second Step}
	We show that the size of $P^\star$ can be bounded by a polynomial in $n^m$ and $\ell^m$.
	For any tuple of agent actions $\textbf{a}=(a_\theta)_{\theta \in \Theta} \in \bigtimes_{\theta \in \Theta} A$, the set $P(\textbf{a})$ is an $m$-dimensional polytope, and, thus, each vertex in $\mathcal{V}(P(\textbf{a}))$ is determined by the intersection of exactly $m$ hyperplanes among those defining it.
	Each polytopes $P(\textbf{a})$ is characterized by a subset of the hyperplanes defining the sets $P(a,\theta)$ for $a \in A$ and $\theta \in \Theta$.
	After removing duplicates, we can conclude that, for each $\theta \in \Theta$, there are at most $\binom{n}{2}$ hyperplanes resulting from Constraints~\ref{eq:hyperplane}, which are those defining the boundaries between the sets $P(a,\theta)$ and $P(a',\theta)$, for any pair of actions $a, a' \in A$ such that $a' \neq a$.
	Moreover, there are $m$ hyperplanes resulting from non-negativity constraints, namely $p_\omega \geq 0$ for every $\omega \in \Omega$.
	As a result, each polytope $P(\textbf{a})$ is defined by a subset of the same set of at most $\ell n^2 + m$ hyperplanes.
	Hence, each vertex in $P^\star$ is obtained as the intersection of exactly $m$ of these at most $\ell n^2 + m$ hyperplanes and we can conclude that there are at most $\binom{\ell n^2+m}{m}$ vertices in $P^\star$.
	%
	In conclusion, to find an optimal contract it is sufficient that the algorithm enumerates all the vertices in $P^\star$, which requires time polynomial in $n^m$ and $\ell^m$.
\end{proof}

\end{document}